\newtheorem{theorem}{Theorem}
\DeclareMathAlphabet\mathbfcal{OMS}{cmsy}{b}{n}
\def\endthebibliography{%
	\def\@noitemerr{\@latex@warning{Empty `thebibliography' environment}}%
	\endlist
}
\begin{document}
	%
	\title{MetaSlicing: A Novel Resource Allocation Framework for Metaverse}
	%
	%
	%
	
	\author{Nam~H.~Chu,
		Dinh~Thai~Hoang,
		Diep~N.~Nguyen,	
		Khoa~T.~Phan, \\		
		Eryk~Dutkiewicz,
		Dusit Niyato,
		and Tao~Shu
		\thanks{Nam~H.~Chu, Dinh~Thai~Hoang, Diep~N.~Nguyen, and Eryk~Dutkiewicz are with the School of Electrical and Data Engineering, University of Technology Sydney, Australia (e-mails: namhoai.chu@student.uts.edu.au, hoang.dinh@uts.edu.au, diep.nguyen@uts.edu.au, and eryk.dutkiewicz@uts.edu.au).}
		\thanks{Khoa~T.~Phan is with School of Engineering and Mathematical Sciences, Department of Computer Science and Information Technology, La Trobe University, Melbourne, Australia (e-mail: K.Phan@latrobe.edu.au).}
		\thanks{Dusit Niyato is with the School of Computer Science and Engineering, Nanyang Technological University, Singapore 639798 (e-mail: dniyato@ntu.edu.sg).}
		\thanks{Tao Shu is with the Department of Computer Science and Software Engineering, Auburn University, Auburn, AL 36849. (e-mail: tshug@auburn.edu).}
	}%
		

\IEEEtitleabstractindextext{%
	\begin{abstract}
  		Creating and maintaining the Metaverse requires enormous resources that have never been seen before, especially computing resources for intensive data processing to support the Extended Reality, enormous storage resources, and massive networking resources for maintaining ultra high-speed and low-latency connections.
  		Therefore, this work aims to propose a novel framework, namely MetaSlicing, that can provide a highly effective and comprehensive solution in managing and allocating different types of resources for Metaverse applications. 
  		In particular, by observing that Metaverse applications may have common functions, we first propose grouping applications into clusters, called MetaInstances. In a MetaInstance, common functions can be shared among applications.
  		As such, the same resources can be used by multiple applications simultaneously, thereby enhancing resource utilization dramatically.
  		To address the real-time characteristic and resource demand's dynamic and uncertainty in the Metaverse, we develop an effective framework based on the semi-Markov decision process and propose an intelligent admission control algorithm that can maximize resource utilization and enhance the Quality-of-Service for end-users.
  		Extensive simulation results show that our proposed solution outperforms the Greedy-based policies by up to 80\% and 47\% in terms of long-term revenue for Metaverse providers and request acceptance probability, respectively.
  		  		
	\end{abstract}
	
	\begin{IEEEkeywords}
		Metaverse, MetaSlice, MetaInstance, MetaSlicing, sMDP, deep reinforcement learning, resource allocation.
	\end{IEEEkeywords}
}
	
	%
	\maketitle
	\IEEEdisplaynontitleabstractindextext
	\IEEEpeerreviewmaketitle

\IEEEraisesectionheading{\section{Introduction}}
	\label{sec:intro}
	\IEEEPARstart{A}lthough the Metaverse's concept first appeared in 1992~\cite{stephenson_1992}, it has just been attracting more attention from academia and industry in the last few years, thanks to the recent advances in technologies (e.g., extended reality, 5G/6G networks, and edge intelligence) along with great efforts of many big corporations such as Facebook~\cite{facebook_2021} and Microsoft~\cite{microsoft_2021}.
		The Metaverse is expected to bring a new revolution to the digital world.
		Unlike existing virtual worlds (e.g., Second Life and Roblox), where the users' presentations (e.g., avatars/characters) and assets are limited in specific worlds, the Metaverse can be realized as a seamless integration of multiple virtual worlds~\cite{youtube_metaverse}.
		Each virtual world in the Metaverse can be created for a certain application, such as entertainment, education, and healthcare.
		Similar to our real lives, Metaverse users can bring their assets from one to another virtual world while preserving their values, and vice versa.
		Moreover, the Metaverse is expected to further integrate digital and physical worlds, e.g., digitizing the physical environment by the digital twin~\cite{wu_digital_2021}.		
		For example, in the Metaverse, we can create our virtual objects, such as outfits and paintings, and then bring them to any virtual world to share or trade with others.
		We also can share virtual copies of a real object in different virtual worlds.	
		Thus, the Metaverse will bring total new experiences that can change many aspects of our daily lives, such as entertainment, education, e-commerce, healthcare, and smart industries~\cite{yu_healthcare_2012, kwon_smart_2021, jeong_ecommerce_2022}.	
	
	However, extremely-high resource demand in the Metaverse is one of the biggest challenges that is impeding the deployment of the Metaverse~\cite{xu2022full}.
		To fulfil the Quality-of-Service (QoS) and user experience requirements in the Metaverse, it demands enormous resources that may have never been seen before. 
		First, the Metaverse is expected to support millions of users simultaneously since each Metaverse application can host a hundred thousand users simultaneously.
		For example, the peak number of concurrent players of Counter Strike - Global Offensive is more than one million in 2021~\cite{clement_steam_2022}.
		It is forecasted that data usage on networks can be expanded more than $20$ times by the operation of Metaverse~\cite{credit_metaverse_2022}.
		Second, the Extended Reality (XR) technology is believed to be integrated into Metaverse's applications such that users can interact with virtual and physical objects via their digital avatars, e.g., digital twin~\cite{xu2022full}. 
		Therefore, the Metaverse requires extensive computing to render three dimensional (3-D) objects, a large amount of data collected from perceived networks, e.g., Internet of Things (IoT), and an ultra-low delay communication to maintain a seamless user experience.
		Third, unlike the current online platforms (e.g., massive multiplayer online role-playing games where the uplink throughput can be much lower than that of downlink throughput~\cite{wang_characterizing_2012}), the Metaverse requires extremely-high throughput for both uplink and downlink transmission links.
		The reason is that Metaverse users can create their digital objects and then share/trade them via this innovation platform. 		
		Therefore, the Metaverse's demand for resources (e.g., computing, networking, and storage) likely exceeds that of any existing massive multiplayer online application~\cite{xu2022full}.
	
	In this context, although deploying the Metaverse on the cloud is a possible solution, it leads to several challenges.
		First, the cloud is often located in a physical area (e.g., a data center), making it potentially a point-of-congestion when millions of users connect at once.
		Second, since users come from around the world, a huge amount of exchanged data puts stress on the communication infrastructure.
		This results in high delay, which severely impacts the Metaverse since the delay is one of the crucial drivers of user experience~\cite{zhao_estimating_2017}.	
		In this context, multi-tier resource allocation architecture, where the computing, storage, networking, and communication capabilities are distributed along the path from end-users to the cloud, is a promising solution for the Metaverse implementation.		
	
	In the literature, there are only a few attempts to investigate the Metaverse resource management~\cite{jiang2021reliable, xu2021wireless, ng2021unified, han2021dynamic}.
		Specifically, in~\cite{jiang2021reliable}, the authors consider computing resource allocation for a single-edge computing architecture that has a limited computing resource to allocate for some nearby Metaverse users.
		Similarly, in~\cite{xu2021wireless} and \cite{ng2021unified}, resource allocation at the edge is considered, but more resource types, i.e., computation and communication, are considered.
		In particular, the work in~\cite{xu2021wireless} proposes a pricing model-based resource management to accelerate the trading of Virtual Reality (VR) services between end-users and VR service providers.
		In~\cite{ng2021unified}, the authors address the stochastic demand problem for an application of education in the Metaverse. 
		Specifically, they propose a stochastic optimal resource allocation method to minimize the cost for the virtual service provider.	
		Unlike the above works, in~\cite{han2021dynamic}, the authors propose an evolutionary game-based resource management for perception networks (e.g., IoT) that are used to collect data for the Metaverse. 
				
	It can be observed that none of the above studies considers the multi-tier computing architecture for resource allocation problem.
		Instead, their approaches are only appropriate for a single-tier edge computing architecture~\cite{xu2021wireless, ng2021unified, jiang2021reliable, han2021dynamic}.
		However, as analyzed above, due to the extremely high resource demands of Metaverse applications, the single-tier computing resource model may not be appropriate and effective.		
		Furthermore, it can be observed that in the Metaverse, many Metaverse applications may share some common functions.		
		For example, a digital map is indeed a common function between tourism and navigation applications.
		Currently, sharing functions among applications has already been made. 
		For instance, Google Map's Application Programming Interface (API) provides various functions (e.g., digital map, check-in, display live data synching with location~\cite{google_map_api}) that can be shared among many applications, e.g., Pokemon Go~\cite{pokemon}, Wooorld~\cite{wooorld}, and CNN iReport Map\cite{cnn_map}.
		This special feature of the Metaverse indeed can be leveraged to maximize resource utilization for Metaverse applications. 
		However, none of the above works can exploit the similarity among applications to improve resource utilization for the Metaverse.
		Moreover, in practice, users join and leave the Metaverse at any time, leading to the high uncertainty and dynamic of resource demands.
		Among the aforementioned works, only the study in~\cite{ng2021unified} addresses the stochastic demands of users.
		Nevertheless,  it only considers resource allocation for a Metaverse education application. 
		In addition, this approach is only appropriate for a single-tier resource allocation and could not leverage the similarities among Metaverse applications' functions in order to maximize system performance. 
		Thus, there is an urgent need for an effective and comprehensive solution for the Metaverse to handle not only the massive resource usage but also the dynamic and uncertain resource demand.	
	
	To address all the aforementioned challenges, we propose a novel framework, namely MetaSlicing, to intelligently allocate diverse types of resources for Metaverse applications by analyzing the incoming requests and allocating appropriate resources, and thereby maximizing the whole system performance.
		Firstly, we introduce the idea of decomposing an application into multiple functions to facilitate the deployment and management of Metaverse, which are highly complex.		 
		In particular, each function of an application can be initialized separately and placed at a different tier in the system according to functions' requirements and tiers' available resources.
		For example, functions with low latency requirements can be placed at a low tier (e.g., tier-1), while those with low update frequency can be placed at a higher tier.
		By doing so, the application decomposition can not only provide a flexible solution for deploying Metaverse applications but also  utilize all networks' resources from different tiers.
		Secondly, we propose a novel technique, called MetaInstance, to address extreme-high resource demands of the Metaverse.
		The MetaInstance aims to improve resource utilization by exploiting the similarities among Metaverse applications. 
		To be more specific, applications with common functions will be grouped into a MetaInstance, and the common functions will be shared among these applications instead of creating one for each application. 
		Therefore, this technique can save more resources.
		Finally, to address the uncertainty and dynamic of resource demands as well as the real-time response of Metaverse applications, we propose a highly-effective semi-Markov decision process-based framework together with a reinforcement learning algorithm to automatically learn and find the optimal policy for the system under dynamic and uncertain resource demand.
		
	The main contributions of this work are summarized as follows:
 	\begin{itemize}
 		\item We propose a novel framework in which different types of resources at different tiers of the computing architecture can be allocated smartly to maximize the system performance for the Metaverse.
 		\item We introduce two innovative techniques, including Metaverse application decomposition and MetaInstance, to maximize resource utilization for the proposed multi-tier computing-based Metaverse.  		
 		\item We propose a highly effective admission control model based on the semi-Markov decision process that can capture the high dynamic and uncertainty of resource demand as well as the real-time characteristic of the Metaverse.
 		\item We develop an intelligent algorithm that can automatically find the optimal admission control policy for the Metaverse without requiring the complete information about the dynamic and uncertainty of resource demand.
 		\item We perform extensive simulations not only to explore the resilience of our proposed framework but also to gain insights into the key factors that can affect the system performance.  
 	\end{itemize}
 
%

	\section{System Model}
	\label{sec:model}
	\begin{figure*}[t]
		\centering
		\includegraphics[width=0.95\linewidth]{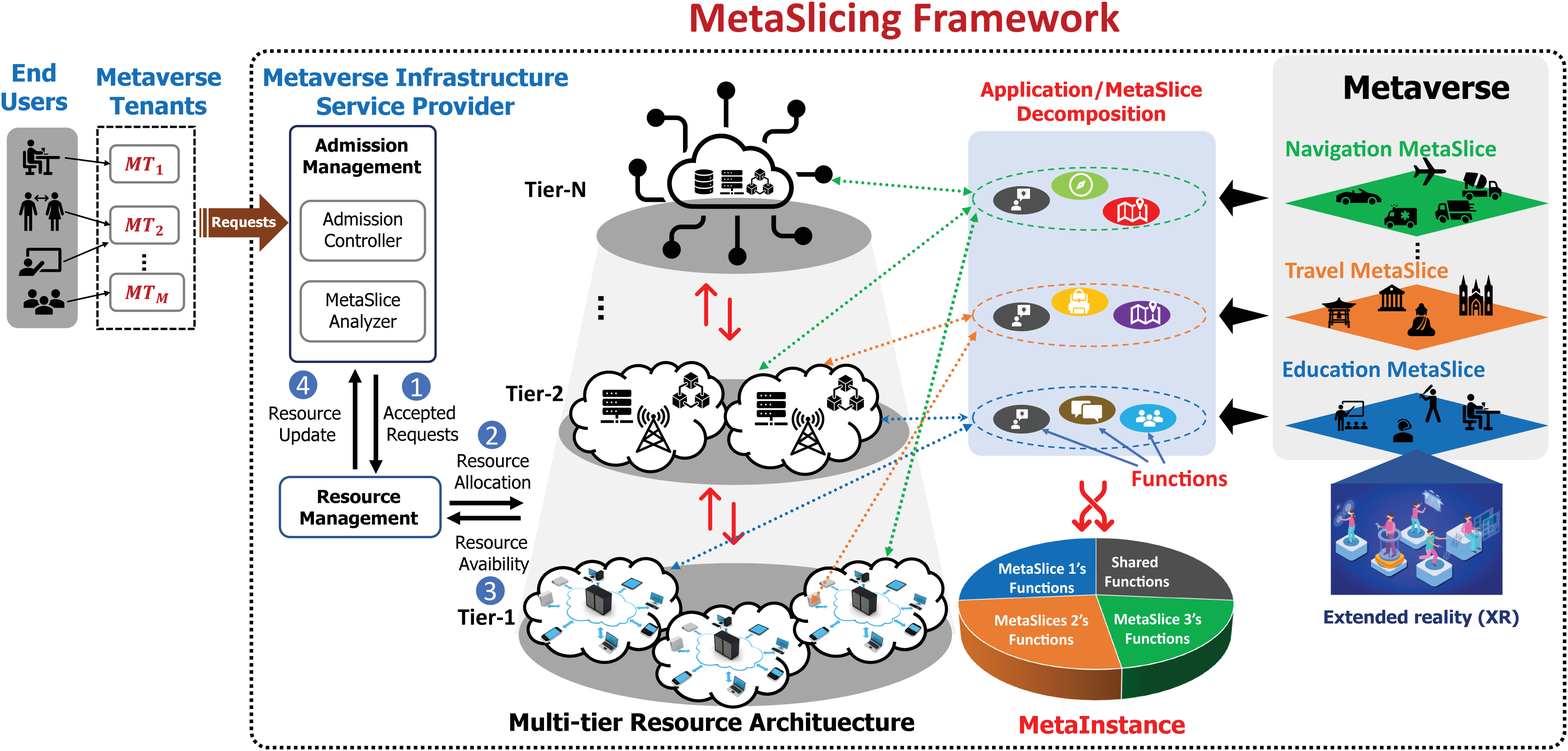}
		\caption{The system model of the proposed MetaSlicing framework. In this framework, different resource types in different tiers can be used and shared to create Metaverse applications (i.e., MetaSlices)}
		\label{fig.system_model}
	\end{figure*}
	In this paper, we consider a system model including three main parties, i.e., (i) End-users, (ii) Metaverse tenants, and (iii) the Metaverse Infrastructure Service Provider (MISP), as illustrated in Fig.~\ref{fig.system_model}.
		First, an end-user subscribes to a Metaverse tenant to request a Metaverse application, namely MetaSlice.
		Then, the Metaverse tenant will request the MetaSlice from the MISP according to its subscribed users' demands.
		If a request is accepted, the MISP will allocate its resources to initiate this MetaSlice.	
		In the following, we explain the main components together with their interactions in our proposed framework in more details.
				
	\subsection{MetaSlicing: Dynamic Resource Allocation Framework for Metaverse}	
	\label{subsec:metaslicing}
	As discussed in the previous section, Metaverse applications usually demand extremely high resources, and thus we propose a novel multi-tier resource allocation architecture that can effectively and dynamically allocate resources for MetaSlices. 	
		First, this architecture alleviates the extensive resource demands of MetaSlices for both the Metaverse tenants and end-users.
		Second, this architecture enables the distribution of different types of resources, e.g., computing, storage, networking, and communication capabilities, along the path from end-users to the cloud.
		By doing so, we can leverage the resources placed near end-users, resulting in a low delay and high QoS for users.
		Moreover, distributing resources increases the resilience of the system compared to the traditional centralized cloud-based resource allocation architecture.

		To maximize resource usage efficiency, we first observe that a MetaSlice involves multiple functions, which can operate independently. 
		For example, in a travel MetaSlice, a recommendation function can suggest attractive places to users based on their locations or preferences.
		Then, users can use a path-finding function to find the best route to these places or create a tour of several places.
		Thus, these functions are independent and can be designed to work separately to make more conveniences for maintaining and upgrading purposes.
		As such, this paper considers that a MetaSlice can be decomposed into multiple independent functions. 
		Note that this work does not focus on optimal application decomposition, and we assume that applications can be decomposed by using existing methods, e.g.,~\cite{Sahhaf2015Network, alturki2019Exploring}. 
		We consider that each function is allocated dedicated resources due to the strict QoS requirements of Metaverse applications.   
		In particular, a function with the dedicated resource allocation scheme likely executes faster than this function under the dynamic scheme. 
		This is because, under the dynamic resource allocation scheme, a resource allocation procedure is performed whenever a request for processing arrives at a function, leading to an unavoidable additional delay that may degrade the QoS of Metaverse users.
		In addition, because resources are not reserved for each function in a dynamic allocation scheme, a function may not have sufficient  resources to support an application's request that arrives when a server is overloading with other functions, leading to a high delay or even service disruption.
		This problem can be alleviated by migrating the function to another server with adequate resources; however, it introduces another additional delay to the function execution.				
		In this context, dedicated resource allocation can avoid this issue by reserving resources for each function.
		Moreover, the study in~\cite{wolke2014planning} points out that under a typical load of business applications, the dedicated resource allocation can achieve a better energy efficiency than that of dynamic resource allocation.
	
	In practice, the function decomposition technique provides flexibility for Metaverse implementation.
		For example, a MetaSlice for travel may consist of several major functions such as a digital map, real-time traffic, real-time weather, and a driving assistant.
		In this case, these functions can be placed dynamically in the multi-tier resource allocation architecture depending on the functions' requirements.
		For instance, real-time traffic and driving assistant functions can be placed at tier-1 near end-users since they require low delay, while a digital map (that does not need frequent updates) can be located at a high tier, e.g., at the cloud.		
		Thus, the functional decomposition technique offers a flexible and effective implementation of multi-tier based MetaSlices. 
		To support the decomposition technique, we consider that a MetaSlice is created based on the MetaBlueprint, i.e., a template describing the workflow, configuration, and structure for initializing and managing this MetaSlice during its life cycle. 
		From a technical standpoint, a MetaSlice (i.e., an application in the Metaverse) can be analogous to the network slice paradigm in the fifth generation of the cellular network (5G) that consists of multiple network functions~\cite{ngmn_nsi_2016}.
		  	 
	We consider different types of MetaSlices, e.g., tourism, education, industry, and navigation.
		MetaSlices can be grouped into $I$ classes based on their characteristics, such as occupied resources, technical configurations, and QoS.		
		For example, navigation MetaSlice may have driving assistant functions requiring ultra-low latency and highly-reliable connections, while security and resilience are among the top concerns of e-commerce and industry MetaSlices.			
		In addition, different types of MetaSlice may share the same functions.
		For example, tourism and navigation MetaSlices can use the same underlying digital map and the real-time traffic and weather functions whose data are collected by the same perception network, e.g., IoT.
			
	Moreover, we can observe that different Metaverse tenants can create/manage multiple variants from the same type of MetaSlice (e.g., education, industry, or navigation). 
		Therefore, it is likely that ongoing MetaSlices may share the same functions.    
		In this case, a lot of resources can be shared, leading to greater resource utilization and higher revenue for the MISP, {similar to that in Network Slicing~\cite{ngmn_nsi_2016}}.	
		Based on this fact, we consider that Metaverse applications can be classified into groups, namely MetaInstances.
		Thus, a MetaInstance can be defined by two function types, i.e., (i) shared function and (ii) dedicated function belonging to specific MetaSlices, as illustrated in Fig.~\ref{fig.system_model}.				
		In this case, a MetaInstance can maintain a function configuration consisting of a list of functions and a description of interactions among them.
		Because the capability of a function is limited, sharing a function for too many MetaSlices may lead to a decrease in user experience (e.g., processing delay) or even service disruption. 
		Therefore, in practice, a function can be only shared by a maximum number of $N_L$ MetaSlices. 
		From the technical perspective, the implementation of a MetaInstance can be similar to that of the Network Slice Instance (NSI), where network slices can share some Network Functions (NFs)~\cite{ngmn_nsi_2016}.
 
		It is worth mentioning that even though the MetaSlicing may be similar to the network slicing~\cite{ngmn_nsi_2016}, they are actually not the same.
		In particular, network slicing aims to address the diversity (or even conflict) in communication requirements among various businesses by running multiple logical networks (i.e., Network Slices) over a physical network.
		For example, one (e.g., automotive customers) may require ultra-low latency connections while others (e.g., manufacturing customers) require ultra-reliable connections. 
		Thus, the network slicing focuses on providing diverse types of communications. 
		In contrast, MetaSlicing is a framework to effectively manage resources in a multi-tier computing architecture by decomposing an application into functions and then optimally distributing them at different tiers. 
		In addition, as explained in the previous paragraphs, due to different features of Metaverse applications, resource allocation scheme for MetaSlicing is also designed different from that of the network slicing, where dynamic resource allocation approaches are preferred~\cite{zhang2017network}. 
			
	Based on the aforementioned analysis, we can observe that, on the one hand, our proposed MetaSlicing framework can offer a great solution to the MISP by maximizing the resource utilization and at the same time minimizing the deployment cost and initialization time for Metaverse's applications. 
		On the other hand, this framework can also benefit end-users by achieving greater user experience, e.g., lower delay and more reliable services.
		To achieve these results, the Admission Controller in MetaSlicing plays a critical role.
		For example, accepting requests of MetaSlices that share some functions with the ongoing MetaSlices may help the system to save more resources than accepting those with less or without sharing functions with the MetaSlices running in the system.
		In addition, Resource Management is another important factor determining the Metaverse system performance.
		In the following subsection, we explain these components in our proposed MetaSlicing framework.	
	
	\subsection{Admission Control and Resource Management}
	\label{subsec:addmission_control}	
	Recall that according to the demands of subscribed end-users, a Metaverse tenant sends a MetaSlice request associated with MetaBlueprint to the MISP. Then, the Admission Management and Resource Management are executed as follows.		
		As shown in Fig.~\ref{fig.system_model}, the Admission Management block of the MISP consists of a MetaSlice Analyzer and an Admission Controller.
		Once receiving a MetaSlice request, the MetaSlice Analyzer will analyze the request's MetaBlueprint to determine similarities between the functions and configuration of the requested MetaSlice and those of the ongoing MetaInstances.
		In this paper, we consider that a MetaBlueprint consists of at least (i) the function configuration record, including the list of required functions and the description of interactions among them, and (ii) the MetaSlice configuration (e.g., class ID and required resources).	
		Based on the similarity analysis (to be presented in Section IV) obtained from the MetaSlice Analyzer and the currently available resources of the system, the Admission Controller decides whether to accept the request or not according to its admission policy.
	
	Suppose that a MetaSlice request is accepted, the Resource Management allocates the accepted MetaSlice to a MetaInstance with the highest similarity index and updates this MetaInstance accordingly.
		Specifically, if the accepted MetaSlice has dedicated functions, system resources are allocated to initiate these new functions.
		If the current MetaInstances do not share any function with the new MetaSlice, a new MetaInstance is created for the accepted MetaSlice.
		When a MetaSlice departs/completes, its resources will be released, and the MetaInstance will be updated accordingly.		 
		
	In this paper, we consider $D$ resource types owned by the MISP, e.g., computing, networking, and storage.
		Then, the required resources for a MetaSlice $m$ can be represented by a resource vector, i.e., \mbox{$\mathbf{n}_m=[n_m^1, \dots, n_m^d,\dots, n_m^D]$}, where $n_m^d$ is the amount of type $d$ resources.
			In this case, the total occupied resources by all MetaSlices cannot exceed the maximum resources of MISP, i.e., 
		\begin{equation}
			\sum_{m \in \mathcal{M}}n_m^d \leq N^d, \quad \forall d \in \{1,\dots, D\},
		\end{equation}
		where $N^d$ is the total amount of type $d$ resources of the MISP, and $\mathcal{M}$ is the set of all running MetaSlices in the system.
		In our proposed solution, the system's available resources and the required resources for the request are two crucial factors for the admission control in MetaSlicing. 
		However, in practice, the future requests' arrival process and its required resources are likely unknown in advance.
		In addition, the departure process of MetaSlices (i.e., how long a MetaSlice remains in the system) is also highly dynamic and uncertain.  
		Therefore, in the next section, we will introduce a framework based on the semi-Markov decision process to address these challenges.
	
	\section{MetaSlicing Admission Control Formulation}
	\label{sec:formulation}
	In this paper, we propose a highly-effective semi-Markov Decision Process (sMDP) framework to address the MetaSlice admission control problem due to the following reasons. 
		First, the sMDP can enable the MetaSlicing's Admission Controller to adaptively make the best decisions (i.e., whether to accept or reject a MetaSlice request) based on the current available system resources (i.e., computing, networking, and storage) and the MetaSlice request's blueprint (i.e., resource, class and similarity) without requiring complete information about the surrounding environment (e.g., arrival and departure processes of MetaSlices) to maximize the MISP's long-term revenue.		
		Second, in practice, MetaSlice requests can arrive at any time, so the admission decision needs to be made as soon as possible.
		However, the conventional Markov Decision Process (MDP) only takes an action at each time slot with an equal time period, making it unable to capture real-time events, e.g., request arrival~\cite{tijms_a_2003}.
		In contrast, the sMDP makes a decision whenever an event occurs so that it can perfectly capture the real time of MetaSlicing.
		Finally, the MetaSlice's lifetime is highly uncertain.
		Upon a MetaSlice departs from the system, its occupied resources are released, and the system state transits to a new state immediately.
		Again, the conventional MDP is unable to capture this transition as it works in a discrete-time fashion.			
		
	To that end, the sMDP will be used in our framework to enable the MISP to make real-time decisions and maximize its long-term revenues. 
		Technically, an sMDP can be defined by a set of five components, including (i) the decision epoch $t_i$, (ii) the state space $\mathcal{S}$, (iii) the action space $\mathcal{A}$, (iv) the transition probability $\mathcal{T}$, and (v) the immediate reward function $r$. 
		In the following sections, we will explain how this framework can capture all events in the MetaSlice system and make optimal decisions for the MISP.
			
		\subsection{Decision Epoch}
		The decision epochs are defined as points of time at which decisions are made~\cite{tijms_a_2003}.
		In our real-time MetaSlicing system, the Admission Controller must make a decision once a MetaSlice request arrives.
			Therefore, we can define the decision epoch as an interval between the occurrence of two consecutive requests.
		
		\subsection{State Space}
		\label{subsec:state_space}
			Aiming to maximize revenue for the MISP with limited resources, several important factors need to be considered in the system state space.				
				First, the current system's available resources and the resources required by the current MetaSlice request are the two most important factors for the Admission Controller to decide whether it accepts this current request or not.
				Second, since the income for leasing each MetaSlice class is different, the class ID $i$ of a requested MetaSlice is another crucial information.
				Third, the similarity index $j$ of a requested MetaSlice reflects the similarity between a requested MetaSlice and the ongoing MetaInstances.
				Recall that the higher the value of $j$ is, the more similarity between the requested MetaSlice and the running MetaInstances is, leading to lower occupied resources when deploying this new request.
				Hence, the similarity index is also an important factor for the MetaSlice admission decision.
			
			We denote the available resources of the system by a vector \mbox{$\mathbf{n}_u\!=\![n_u^1, \dots, n_u^d,\dots, n_u^D]$} where $D$ is the total number of resource types, and $n_u^d$ denotes the number of available resources of type $d$.
			Similarly, the request's required resources are denoted by \mbox{$\mathbf{n}_m=[n_m^1, \dots, n_m^d,\dots, n_m^D]$)}, where $n_m^d$ is the number of requested resources of type $d$. 
			Given the above, the system state space can be defined as follows:
			\begin{equation}
				\label{state_space}
				\begin{aligned}
					\mathcal{S} \triangleq \Big\{\mbox{$\big(n_u^1, \dots, n_u^d,\dots, n_u^D,n_m^1, \dots, n_m^d,\dots, n_m^D, i,j$\big):}\\
					  \quad n_u^d \text{ and } n_m^d \in \{0,\dots,N^d\} \forall d \in \{1,\dots, D\}; \\
					  \quad i \in \{1,\dots,I\}; j \in [0,\dots, J]
					   \Big\},
				\end{aligned}	
			\end{equation}
				where $N^d$ is the maximum resources of type $d$, $I$ is the total number of classes in the system, and $J$ is the maximum similarity index of a MetaSlice derived by the MetaSlice Analyzer.
				By this design, the system state is presented by a tuple, i.e.,  $\mathbf{s} \triangleq (\mathbf{n}_u, \mathbf{n}_m, i, j)$, and the system can work continuously without ending at a terminal state, at which the system stops working~\cite{sutton2018reinforcement}.
			
			Recall that in the sMDP framework, the system only transits from state $\mathbf{s}$ to state $\mathbf{s}'$ if and only if an event occurs (e.g., a new MetaSlice request arrival).			
			We define the event as a vector \mbox{$\mathbf{e} \triangleq [e_1,\dots,e_i, \dots, e_I]$}, where \mbox{$e_i\in \{-1,0,1\}$}.
			Specifically, \mbox{$e_i\!=\!-1$} if a MetaSlice class $i$ departs from the system, \mbox{$e_i\!=\!1$} if a new MetaSlice request class-$i$ arises, and \mbox{$e_i\!=\!0$} otherwise (i.e., no MetaSlice request class-$i$ arrival or departure). 
			Thus, the set of all possible events is given as follows:
			\begin{equation}
				\mathcal{E} \triangleq \big\{\mathbf{e}:e_i\in \{-1,0,1\}; \sum_{i=1}^{I}|e_i| \leq 1  \big\}.
			\end{equation}
			Note that there is a trivial event $\mathbf{e^*} \triangleq (0,\dots, 0)$ meaning that no MetaSlice request of any class arrives or departs. 		
			 	
		\subsection{Action Space}
			If a MetaSlice request arrives at state $\mathbf{s}$, the  Admission Controller can decide whether to accept or reject this request to maximize the long-term revenue for the MISP.
			Thus, the action space at state $\mathbf{s}$ can be defined by:
			\begin{equation}
				\mathcal{A}_\mathbf{s} \triangleq \{0,1\}.
			\end{equation}			
			In particular, if the requested MetaSlice is accepted, the action at state $\mathbf{s}$ is equal to one, i.e., $a_\mathbf{s}=1$. Otherwise, $a_\mathbf{s}=0$. 
		\subsection{State Transition Probabilities}
			This sub-section analyzes the sMDP's dynamic by characterizing the underlying Markov chain's state transition probabilities.						
				Since the sMDP is based on the semi-Markov Process (sMP) that consists of a renewal process and a Continuous-time Markov Chain (CTMC) $\{X(t)\}$, the uniformization method can be used to derive the state transition probabilities $\mathcal{T}$~\cite{gallager_discrete_1995,tijms_a_2003,kallenberg_markov_online}.
				Specifically, the uniformization transforms the CTMC into a corresponding stochastic process $\{\bar{X}(t)\}$ whose transition epochs are derived from a Poisson process at a uniform rate, whereas state transitions follow a discrete-time Markov chain $\{X_n\}$.    
				These two processes, i.e., $\{X(t)\}$ and $\{\bar{X}(t)\}$, are proven to be probabilistically equivalent~\cite{tijms_a_2003}.
			
			In practice, similar to many communication systems, e.g., mobile phone systems, we have never known when a user request comes and leaves the system.
				Thus, we can consider that the arrival process of class-i requests follows the Poisson distribution with mean $\lambda_i$ while the departure of class-i MetaSlice follows an exponential distribution with mean $1/\mu_i$, as those in~\cite{gallager_discrete_1995}.
				In this way, the parameters of the uniformization method are defined as:
				\begin{align}
					z &= \max_{\mathbf{x}\in\mathcal{X}} \sum_{i=1}^{I}(\lambda_i + x_i\mu_i),\\
					z_{\mathbf{x}} &=  \sum_{i=1}^{I}(\lambda_i + x_i\mu_i),
				\end{align}					
				where each element of vector \mbox{$\mathbf{x} \triangleq [x_1,\dots,x_i,\dots,x_I]$} represents the number of on-going MetaSlices in the corresponding class (e.g., $x_i$ is the number of MetaSlices in class $i$ that are running simultaneously in the system), and $\mathcal{X}$ is the set containing all possible values of $\mathbf{x}$.
				Now, the events' probabilities are determined based on $z$ and $z_{\mathbf{x}}$ as follows:
				\begin{itemize}
					\item The probability of a class $i$ request  occurring in the next event $\mathbf{e}$ is $\lambda_i/z$.
					\item The probability of a class $i$ MetaSlice  departing in the next event $\mathbf{e}$ is $x_i\mu_i/z$.
					\item The probability of trivial event (i.e., no MetaSlice request of any class arrives or departs) arising in the next event $\mathbf{e}$ is $1-z_{\mathbf{x}}/z$.
				\end{itemize}
				Then, we can obtain the state transition probabilities \mbox{$\mathcal{T} =\{P_{\mathbf{s}\mathbf{s}'}(a_\mathbf{s})\}$} with \mbox{$\mathbf{s},\mathbf{s}' \in \mathcal{S}$} and \mbox{$a_\mathbf{s} \in \mathcal{A_\mathbf{s}}$}, i.e., the probability that the system moves between states by taking actions.
			
		\subsection{Immediate Reward Function}
		To maximize the MISP's long-term revenue, the immediate reward function needs to capture the income from leasing resources to Metaverse tenants.
			Here, we consider that the revenues for leasing resources for different classes are different since different classes may have different requirements such as reliability and delay.
			Recall that in our proposed Metaverse system, MetaSlices can share some functions with others, leading to differences in resource occupation even between MetaSlices from the same class.
			As such, even if two MetaSlices have the same income, accepting a MetaSlice that requires fewer resources will benefit the provider in the long term.
			Therefore, the number of resources required by a MetaSlice is another key factor.
			
			To this end, the immediate reward function can be defined as follows:
			\begin{align}
				r(\mathbf{s},a) = \left\{
				\begin{array}{ll}
					r_i -  \sum_{d=1}^{D}w_dn^d_o, &\mbox{if $e_i=1$ and $a = 1$},\\
					0, &\mbox{otherwise},					
				\end{array}	\right.
				\label{eq:reward_function}
			\end{align}
			where $r_i$ is the revenue from leasing resources for a MetaSlice class $i$, and $n^d_o$ is the number of type $d$ resources occupied by this requested MetaSlice. 
			The trade-offs between these factors are reflected by weights, i.e.,  \mbox{$\{w_i\}_{d=1}^D$}.				
			In practice, these weights are chosen by the MISP depending on its business strategies.
			Note that \eqref{eq:reward_function} implies that if slices have the same income (i.e., they are in the same class), accepting requests with fewer resource demands can help the provider to maximize the long-term revenue.
				
		\subsection{Optimization Formulation}
		Since the statistical characteristics (e.g., arrival rate and departure rate of a MetaSlice) of the proposed system are stationary (i.e., time-invariant), the policy $\pi$ for the meta-controller can be described as the time-invariant mapping from the state space to the action space, i.e., \mbox{$\pi:\mathcal{S} \rightarrow \mathcal{A}_\mathbf{s}$}.
			This study aims to find an optimal policy for the MetaSlicing's Admission Controller that maximizes a long-term average reward function $R_\pi(\mathbf{s})$, which is defined as an average expected reward obtained by starting from state $\mathbf{s}$ and following policy $\pi$ as follows:
			\begin{equation} 
				\label{eq:average_reward_function}
				\mathcal{R}_\pi(\mathbf{s})	=	\lim_{G \rightarrow \infty} \frac{\mathbb{E} \big[\sum_{g=0}^{G} r(\mathbf{s}_g,\pi(\mathbf{s}_g))|\mathbf{s}_0=\mathbf{s} \big]}{\mathbb{E} \big[\sum_{g=0}^{G} \tau_g |\mathbf{s}_0=\mathbf{s} \big]}, \forall \mathbf{s} \in \mathcal{S},	
			\end{equation}
			where $G$ is the total number of decision epochs, $\pi(\mathbf{s}_g)$ is the action derived by $\pi$ at decision epoch $g$, and $\tau_g$ is the interval time between two consecutive decision epochs. 
			The existence of the limit in $\mathcal{R}_\pi(\mathbf{s})$ is proven in Theorem~\ref{theorem1}.
			Thus, given the currently available resources and the MetaSlice request's information, the optimal policy $\pi^*$ can give optimal actions to maximize $\mathcal{R}_\pi(\mathbf{s})$, thereby maximizing the long-term revenue for the infrastructure provider. 			

		\begin{theorem}
			\label{theorem1}
			Given that the state space $\mathcal{S}$ and the number of decision epochs in a certain period of time are finite, we have:
			\begin{align}
				\mathcal{R}_\pi(\mathbf{s})	&=	\lim_{G \rightarrow \infty} \frac{\mathbb{E} \big[\sum_{g=0}^{G} r(\mathbf{s}_g,\pi(\mathbf{s}_g))|\mathbf{s}_0=\mathbf{s} \big]}{\mathbb{E} \big[\sum_{g=0}^{G} \tau_g |\mathbf{s}_0=\mathbf{s} \big]}\\
				&= \frac{\overline{\mathcal{T}}_\pi r(\mathbf{s}, \pi(\mathbf{s}))}{\overline{\mathcal{T}}_\pi y(\mathbf{s}, \pi(\mathbf{s}))} , \qquad \forall \mathbf{s} \in \mathcal{S},
			\end{align}
				where $r(\mathbf{s}, \pi(\mathbf{s}))$ is the expected immediate reward and $y(\mathbf{s}, \pi(\mathbf{s}))$ is the expected interval between two successive decision epochs when performing action \mbox{$\pi(\mathbf{s})$} at state $\mathbf{s}$, and $\overline{\mathcal{T}}_\pi$ is the limiting matrix of the embedded Markov chain corresponding to policy $\pi$, which is given based on the transition probability matrix of this chain, i.e., $\mathcal{T}_\pi$, as follows:
				\begin{equation}
					\label{eq:limiting_matrix}
					\overline{\mathcal{T}}_\pi = \lim_{G \rightarrow \infty} \frac{1}{G} \sum_{g=0}^{G-1} \mathcal{T}_\pi^g.
				\end{equation}
		\end{theorem}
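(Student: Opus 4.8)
The plan is to reduce the statement to the classical Cesàro theory of finite Markov reward chains. The key observation is that the sequence of states $\{\mathbf{s}_g\}$ visited at the decision epochs forms a finite Markov chain with transition matrix $\mathcal{T}_\pi$, so the expected reward collected at epoch $g$ is obtained by propagating the per-state reward vector $r(\cdot,\pi(\cdot))$ through $g$ steps of the chain, namely $\mathbb{E}[r(\mathbf{s}_g,\pi(\mathbf{s}_g))\mid\mathbf{s}_0=\mathbf{s}] = [\mathcal{T}_\pi^g\, r]_{\mathbf{s}}$, and analogously the expected inter-epoch time is $\mathbb{E}[\tau_g\mid\mathbf{s}_0=\mathbf{s}] = [\mathcal{T}_\pi^g\, y]_{\mathbf{s}}$, where the vector $y(\cdot,\pi(\cdot))$ collects the mean sojourn times (here I would invoke the semi-Markov property, so that the conditional mean of $\tau_g$ depends only on $\mathbf{s}_g$ and $\pi(\mathbf{s}_g)$). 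Summing over $g$ and dividing numerator and denominator by $G$ then turns both into Cesàro averages of powers of $\mathcal{T}_\pi$.

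First I would establish the existence of the limiting matrix $\overline{\mathcal{T}}_\pi$ in \eqref{eq:limiting_matrix}: because $\mathcal{S}$ is finite, $\mathcal{T}_\pi$ is a finite stochastic matrix, and a standard result guarantees that its Cesàro averages $\tfrac{1}{G}\sum_{g=0}^{G-1}\mathcal{T}_\pi^g$ converge to a stochastic matrix, regardless of periodicity. Next, using the linearity of expectation (valid since $\mathcal{S}$ is finite and both the rewards and the mean holding times are bounded), I would write
\begin{equation}
	\mathbb{E}\Big[\textstyle\sum_{g=0}^{G} r(\mathbf{s}_g,\pi(\mathbf{s}_g))\,\Big|\,\mathbf{s}_0=\mathbf{s}\Big] = \sum_{g=0}^{G}\big[\mathcal{T}_\pi^g\, r\big]_{\mathbf{s}},
\end{equation}
together with the analogous identity for the denominator, in which $\tau_g$ is replaced by its conditional mean and $r$ by $y$. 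Dividing each side by $G$ and invoking the Cesàro convergence then yields $\tfrac{1}{G}\,\mathbb{E}[\sum_{g=0}^{G} r(\mathbf{s}_g,\pi(\mathbf{s}_g))\mid\mathbf{s}_0=\mathbf{s}] \to [\overline{\mathcal{T}}_\pi\, r]_{\mathbf{s}}$ and the corresponding limit $[\overline{\mathcal{T}}_\pi\, y]_{\mathbf{s}}$ for the denominator.

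The main obstacle is the final passage from the limit of each normalized sum to the limit of their ratio, since $\lim_G A_G/B_G = (\lim_G A_G)/(\lim_G B_G)$ is legitimate only when the denominator limit is strictly positive. I would secure $[\overline{\mathcal{T}}_\pi\, y]_{\mathbf{s}}>0$ as follows: under the uniformization of Section~III-D every mean inter-epoch time $y(\mathbf{s},\pi(\mathbf{s}))$ is bounded below by the positive constant $1/z$, and since each row of $\overline{\mathcal{T}}_\pi$ is nonnegative and sums to one, the entry $[\overline{\mathcal{T}}_\pi\, y]_{\mathbf{s}}$ is a convex combination of these strictly positive holding times and hence is itself at least $1/z>0$. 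Writing the ratio as $A_G/B_G = (A_G/G)/(B_G/G)$ and combining the two separately established limits then gives $\mathcal{R}_\pi(\mathbf{s}) = [\overline{\mathcal{T}}_\pi\, r]_{\mathbf{s}}/[\overline{\mathcal{T}}_\pi\, y]_{\mathbf{s}}$, which is exactly the claimed expression. The residual bookkeeping — that the extra $g=G$ term and the discrepancy between the $G$ and $G+1$ normalizations vanish in the limit because all summands are bounded — I would dispatch as routine.
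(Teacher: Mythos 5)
Your proposal is correct and follows essentially the same route as the paper's proof in Appendix~A: establish the Ces\`aro limiting matrix $\overline{\mathcal{T}}_\pi$ for the finite embedded chain, identify the normalized expected sums of rewards and sojourn times with $\overline{\mathcal{T}}_\pi r(\mathbf{s},\pi(\mathbf{s}))$ and $\overline{\mathcal{T}}_\pi y(\mathbf{s},\pi(\mathbf{s}))$, and conclude via the quotient law for limits using strict positivity of the denominator. If anything, your write-up is slightly more careful than the paper's: the explicit bound $[\overline{\mathcal{T}}_\pi\, y]_{\mathbf{s}} \geq 1/z > 0$ obtained from uniformization sharpens the paper's terser ``since $\tau_g > 0$'' justification, and your observation that Ces\`aro averages of any finite stochastic matrix converge avoids the paper's (unneeded for this step) appeal to aperiodicity and irreducibility.
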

		
		\begin{proof}
			The proof of Theorem~\ref{theorem1} is presented in Appendix~\ref{ap:proof_theorem1}.
		\end{proof}
		Note that the underlying Markov chain of our sMDP model is irreducible (i.e., the long-term average reward is independent of the starting state), which is proven in Theorem \ref{theorem2}.
		\begin{theorem}
			\label{theorem2}
			The long-term average reward $R_\pi(\mathbf{s})$ for any policy $\pi$ is well-defined and independent of the starting state, i.e., $R_\pi(\mathbf{s})=R_\pi, \forall \mathbf{s}\in \mathcal{S}$. 
		\end{theorem}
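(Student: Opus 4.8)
The plan is to lean on the closed form supplied by Theorem~\ref{theorem1}, namely $\mathcal{R}_\pi(\mathbf{s}) = \overline{\mathcal{T}}_\pi r(\mathbf{s},\pi(\mathbf{s})) / \overline{\mathcal{T}}_\pi y(\mathbf{s},\pi(\mathbf{s}))$, and to reduce the claim to a single structural property of the limiting matrix $\overline{\mathcal{T}}_\pi$. The point is that the value at state $\mathbf{s}$ is the inner product of the $\mathbf{s}$-th row of $\overline{\mathcal{T}}_\pi$ with the reward and holding-time vectors, so $\mathcal{R}_\pi(\mathbf{s})$ becomes independent of $\mathbf{s}$ as soon as all rows of $\overline{\mathcal{T}}_\pi$ coincide. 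For a finite-state chain this happens exactly when the chain is \emph{unichain}, i.e., has a single recurrent communicating class; in that case every row of the Cesàro limit $\overline{\mathcal{T}}_\pi$ equals the unique stationary distribution of that class (transient states are visited only finitely often and thus contribute zero limiting frequency). Hence the entire theorem reduces to proving that, for every policy $\pi$, the embedded chain with transition matrix $\mathcal{T}_\pi$ is unichain.

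To establish the unichain property I would single out the canonical empty-system state $\mathbf{s}^{\star}$, in which no MetaSlice is running so that the available resources satisfy $\mathbf{n}_u=[N^1,\dots,N^D]$, and show that $\mathbf{s}^{\star}$ is accessible from every state regardless of $\pi$. This rests on two policy-independent facts read off from the uniformized transition probabilities. First, whenever at least one class-$i$ MetaSlice is active (i.e., $x_i>0$), the departure event carries positive probability $x_i\mu_i/z>0$, and departures occur irrespective of the chosen action; chaining such departures drives the occupied resources monotonically down until the system is empty, so $\mathbf{s}^{\star}$ is reached in finitely many steps with positive probability from any starting state. Second, from $\mathbf{s}^{\star}$ every admissible request configuration arises with positive probability through an arrival event ($\lambda_i/z>0$), so the empty-system states communicate among themselves and $\mathbf{s}^{\star}$ genuinely lies in a recurrent class.

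With $\mathbf{s}^{\star}$ accessible from all of $\mathcal{S}$, the single-recurrent-class conclusion follows by a standard closure argument: any recurrent communicating class is closed (it has no positive-probability exit), yet $\mathbf{s}^{\star}$ is reachable from every recurrent state, which forces $\mathbf{s}^{\star}$ to belong to each recurrent class; since recurrent classes are disjoint, there can be exactly one, namely the class containing $\mathbf{s}^{\star}$. This yields the unichain property, hence identical rows in $\overline{\mathcal{T}}_\pi$ equal to the unique stationary distribution, and therefore $\mathcal{R}_\pi(\mathbf{s})=\mathcal{R}_\pi$ for all $\mathbf{s}\in\mathcal{S}$, with the limit well-defined by Theorem~\ref{theorem1}.

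The main obstacle I anticipate is keeping the accessibility argument uniform over \emph{all} policies, including degenerate ones such as the reject-all policy, under which high-occupancy states are transient rather than recurrent. The care needed is to separate the policy-independent departure dynamics (which alone guarantee drainage to $\mathbf{s}^{\star}$) from the action-dependent acceptance dynamics, and to phrase the conclusion as the existence of a \emph{single} recurrent class rather than full irreducibility of $\mathcal{S}$: not every state need be recurrent, only that all states funnel into the one recurrent class containing $\mathbf{s}^{\star}$. When $\pi$ does accept requests, that recurrent class coincides with the whole reachable state space and the chain is genuinely irreducible, recovering the statement in the surrounding text as a special case.
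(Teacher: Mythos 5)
Your proposal is correct, and it takes a genuinely different---and more careful---route than the paper's own proof. The paper's Appendix~B argues \emph{full irreducibility}: since arrivals and departures follow Poisson/exponential laws and a request may demand arbitrary resources, it asserts that from any state $\mathbf{s}$ the system can move to any other state $\mathbf{s}'\in\mathcal{S}$ in finitely many steps, for any policy, and concludes well-definedness from that. As you yourself anticipate, that assertion fails for degenerate policies: under the reject-all policy no state with $\mathbf{n}_u$ strictly below capacity is reachable from an empty-system state, so the chain is not irreducible over all of $\mathcal{S}$ and the paper's argument implicitly ignores the transient part of the chain. Your unichain argument repairs exactly this weak spot: drainage to an empty-system state relies only on the action-independent departure probabilities $x_i\mu_i/z>0$, the closed-class argument then forces a single recurrent class for \emph{every} policy, and the standard fact that all rows of the Ces\`aro limit $\overline{\mathcal{T}}_\pi$ of a finite unichain equal the unique stationary distribution yields state-independence of $\mathcal{R}_\pi(\mathbf{s})$ through the quotient formula of Theorem~\ref{theorem1}. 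What the paper's approach buys is brevity, and under any policy that accepts with positive probability its irreducibility claim is recovered as your special case; what yours buys is uniformity over all policies, which is what the theorem statement (``for any policy $\pi$'') actually requires. One small point to tidy: the ``empty system'' is a \emph{set} of states differing in the request components $(\mathbf{n}_m,i,j)$, not a single canonical $\mathbf{s}^{\star}$; your observation that these states communicate via arrival events with probability $\lambda_i/z>0$ handles this, but the drainage step should be phrased as reaching \emph{some} empty-system state, after which communication within that set places them all in the same recurrent class.
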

		\begin{proof}
		The proof of Theorem~\ref{theorem2} is presented in Appendix~\ref{ap:proof_theorem2}.
	\end{proof}
		Since the limiting matrix $\overline{\mathcal{T}}_\pi$ exists and the sum of probabilities that the system moves from one state to others is one, we have $\sum_{\mathbf{s'} \in \mathcal{S}}\overline{\mathcal{T}}_\pi(\mathbf{s}'|\mathbf{s})=1$.		
		Given the above, the MetaSlice admission control problem can be formulated as follows:
		\begin{equation} 
			\label{eq:optimization_formulation1}
			\begin{aligned}
				&\underset{\pi}{\text{max}} \qquad\qquad \left(\mathcal{R}_\pi	=	\frac{\overline{\mathcal{T}}_\pi r(\mathbf{s}, \pi(\mathbf{s}))}{\overline{\mathcal{T}}_\pi y(\mathbf{s}, \pi(\mathbf{s}))} \right),\\
				&\text{subject to:} \qquad \sum_{\mathbf{s'} \in \mathcal{S}}\overline{\mathcal{T}}_\pi(\mathbf{s}'|\mathbf{s})=1,\quad \forall \mathbf{s} \in \mathcal{S}.		
			\end{aligned}				
		\end{equation}			
		In the following section, we will discuss our proposed solution that can help the Admission Controller to obtain the optimal admission policy $\pi^*$ to maximize the long-term average reward function, i.e., \mbox{$\pi^* = \underset{\pi}{\text{argmax}}\mathcal{R}_\pi$}.
	\section{AI-based Solution with MetaSlice Analysis for MetaSlice Admission Management}
	\label{sec:solutions}
	This section presents our proposed approach for the MetaSlice admission management.
		We first discuss the main steps in the MetaSlice analysis to determine the similarity between a requested MetaSlice and ongoing MetaSlices.
		Then, we propose a Deep Reinforcement Learning (DRL)-based algorithm for the Admission Controller to address the real-time decision making requirement and the high uncertainty and dynamics of the request's arrival and MetaSlice departure processes, which are, in practice, often unknown in advance. 
		Thanks to the self-learning ability of DRL, the Admission Controller can gradually obtain an optimal admission policy via interactions with its surrounding environment without requiring complete knowledge of the arrival and departure processes of MetaSlices in advance.
		
		\subsection{MetaSlice Analysis}
		\label{subsec:meta_analysis}
		Recall that the major role of the MetaSlice Analyzer is to analyze the request's MetaBlueprint to determine the similarity between the requested MetaSlice and the ongoing MetaInstances.
			Then, the similarity report is used to assist the Admission Controller in deciding whether to accept or reject a request.		
			This paper uses the function configuration to decide the similarity index since it clearly shows the relationship between the requested MetaSlice and the ongoing MetaInstance.
		
		We consider that the proposed framework supports $F$ types of functions.
			Then, we can denote the function configuration of a MetaSlice $m$ by a set $\mathcal{F}_m$ as follows: 
		\begin{equation}
			\mathcal{F}_m=\left\{\mathbf{f}_1^m,\dots, \mathbf{f}_f^m,\dots, \mathbf{f}_F^m\right\},
		\end{equation}
			where the configuration of function $f$ is represented by a vector with size $K$, i.e., \mbox{$\mathbf{f}_f \in \{0,1\}^K$}.
			We define a trivial function configuration vector \mbox{$\mathbf{f}^* \triangleq (0,\dots,0)$} meaning that a function is not required by the MetaSlice. 
			Note that a MetaInstance also maintains a function configuration set and updates it whenever the MetaSlice is initialized or released.
		
		Given two function configuration sets $\mathcal{F}_1$ and $\mathcal{F}_2$, the similarity index can be given as follows:
			\begin{equation}
				j(\mathcal{F}_1,\mathcal{F}_2) = \frac{1}{F}\sum_{f=1}^{F}b(\mathbf{f}_f^1,\mathbf{f}_f^2),
			\end{equation}
			where $b$ can be any similarity function used for vectors such as Jaccard and Cosine similarity functions\cite{jaccard1912the}.
			Here, we use Jaccard similarity that is defined as follows:
			\begin{equation}
				b_\text{Jaccard}(\mathbf{f}_f^1,\mathbf{f}_f^2) = \frac{\mathbf{f}_f^1\cdot\mathbf{f}_f^2}{||\mathbf{f}_f^1||^2 +||\mathbf{f}_f^2||^2 - \mathbf{f}_f^1\cdot\mathbf{f}_f^2},
			\end{equation}	
			where the numerator is the dot product of two vectors and $||\cdot||$ is the Euclidean norm of the vector, which is calculated as the square root of the sum of the squared vector's elements.
					
		The similarity index plays two roles in the MetaSlicing framework.
		First, it provides the Admission Controller with precious information for making decisions.
		Second, based on the accepted request's similarity index, MetaSlicing's resource management determines to put it in an existing MetaInstance or create a new one.   
		
		\subsection{Deep Dueling Double Q-learning based-Admission Controller}
		\label{subsec:D3QL}
		In Reinforcement Learning (RL), Q-learning is widely adopted due to its simplicity in implementation and convergence guarantee after the learning phase~\cite{watkins_q_1992}.
			Nevertheless, using a table to estimate the optimal values of all state-action pairs $Q^*(\mathbf{s},a)$, i.e., Q-values, hinders the Q-learning from being applied in a high-dimensional state space as the problem considered in this paper with hundred thousand states~\cite{luong_application_2019}. 
			In addition, the usage of the Q-table is only feasible when values of states are discrete, but the similarity score in the considered state can be a real number.
			These challenges are addressed by deep Q-learning, in which a Deep Neural Network (DNN), instead of a Q-table, is used  to approximate $Q^*(\mathbf{s},a)$ for all state-action pairs~\cite{mnih_human_2015}.
			However, both Q-learning and deep Q-learning have the same problem of overestimation when estimating Q-values~\cite{hasselt_doubledeep_2016}. 
			This issue makes the learning process unstable or even results in a sub-optimal policy if overestimations are not evenly distributed across states~\cite{thrun_issues_1993}.
		\begin{figure}[t]
			\centering
			\includegraphics[width=0.8\linewidth]{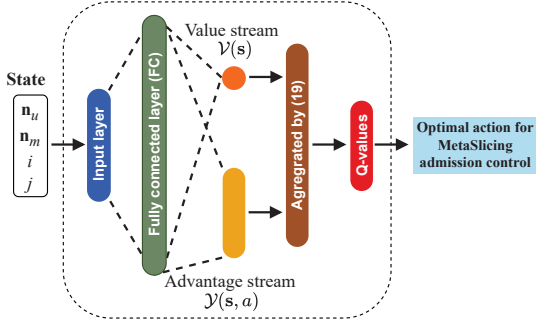}
			\caption{The architecture of iMSAC's DNN}
			\label{fig:dueling_architecture}
		\end{figure}
		To that end, this work proposes a deep RL algorithm for the MetaSlicing Admission Controller, namely iMSAC, that can address these above issues effectively by leveraging three innovative techniques: (i) the memory replay mechanism, (ii) the dueling neural network architecture, and (iii) the deep double Q-learning (DDQL). 
			The details of iMSAC are presented in Algorithm~\ref{alg:d3ql}. 
			The rest of this subsection will explain the key components of the iMSAC. 
			
		Given that the learning phase of the proposed algorithm consists of $T$ time steps, and at time step $t$, the Admission Controller observes the current state $\mathbf{s}_t$ and takes an action $a_t$ according to the $\epsilon$-policy. 
			After that, it observes a next state $\mathbf{s}_{t+1}$ and gets a reward $r_t$.
			This experience data, represented by a tuple \mbox{$<\mathbf{s}_t, a_t, \mathbf{s}_{t+1}, r_t>$}, cannot be used directly to train the DNN since the consecutive experiences are highly correlated, which may lead to a slow convergence rate~\cite{halkjaer_effect_1996}.
			As such, we adopt the memory replay mechanism where experiences are stored in a buffer \textbf{B}, as illustrated in Fig.~\ref{fig:d3ql_model}. 
			Then, at each time step, experiences are sampled uniformly at random to train the DNN.
			By doing so, correlations among experiences can be removed, thereby accelerating the learning process.
			Moreover, as one data point can be used multiple times to train the DNN, this mechanism can indeed improve the data usage efficiency.   
		\begin{figure}[t]
			\centering
			\includegraphics[width=0.95\linewidth]{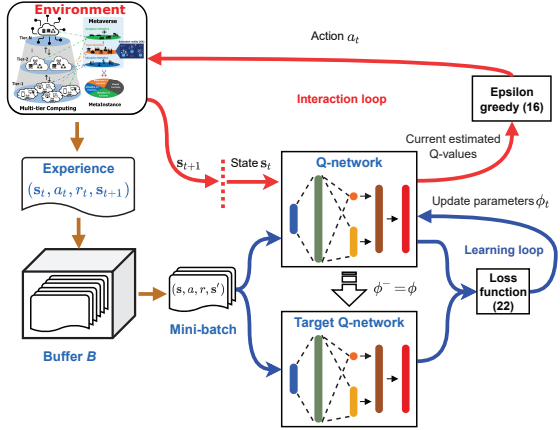}
			\caption{The proposed iMSAC-based Admission Controller for the MetaSlicing framework.}
			\label{fig:d3ql_model}
		\end{figure}
		\begin{algorithm}[t]
			\caption{The iMSAC}
			\label{alg:d3ql}
			\begin{algorithmic}[1]
				\STATE Initialize $\epsilon$ and buffer $\mathbf{B}$. 
				\STATE Create Q-network $\mathcal{Q}$ with random parameters $\phi$. 
				\STATE Create target Q-network $\hat{\mathcal{Q}}$ by cloning the Q-network.
				\FOR{\textit{step = 1 to T}}
				\STATE Get action $a_t$ following the $\epsilon$-greedy policy as follows:
					\begin{align}
						a_t\! =\! \left\{
						\begin{array}{ll}						
							\!\underset{a \in \mathcal{A}}{\text{argmax}}~\mathcal{Q}(\mathbf{s}_t,a; \phi_t),&\mbox{with probability }1-\epsilon,\\
							\!\mbox{random action }a\!\in\!\mathcal{A}, &\mbox{otherwise}.												
						\end{array}	\right.
						\label{eq:epsilon_greedy}
					\end{align}
				\STATE Perform $a_t$, then observe reward $r_t$ and next state $\mathbf{s}_{t+1}$.
				\STATE Store experience $(\mathbf{s}_t, a_t, r_t, \mathbf{s}_{t+1})$ in  $\mathbf{B}$.
				\STATE Create a mini-batch of experiences by sampling randomly from buffer $\mathbf{B}$, i.e., $(\mathbf{s}, a, r, \mathbf{s}') \sim U( \mathbf{B}$).				
				\STATE Obtain the Q-value and the target Q-value by using~\eqref{eq:recontruct_Qfunction_mean} and~\eqref{eq:targetDQN}, respectively.
				\STATE Update $\phi$ based on Stochastic Gradient Descent (SGD) algorithm. 
				\STATE Decrease the value of $\epsilon$.
				\STATE Set $\phi^-= \phi$ at every $C$ steps.
				\ENDFOR
			\end{algorithmic}
		\end{algorithm}
		
		In the proposed iMSAC, since a DNN is used to approximate the Q-values, we define the input and output layers of the DNN according to the state dimensions (i.e., available system resources, requested MetaSlice's resources, class ID, and similarity score) and action space.
			Specifically, when feeding a state to the DNN, we obtain the Q-values for all actions at this state, each given by a neuron at the DNN's output layer. 
			To improve the stability and increase the convergence rate, we propose to use the state-of-the-art dueling neural network architecture~\cite{wang_dueling_2016} for the iMSAC's DNN, as shown in Fig.~\ref{fig:dueling_architecture}.
			In particular, the dueling architecture divides the DNN into two streams.
			The first one estimates the state-value function $\mathcal{V}(\mathbf{s})$, which indicates the value of being at a state $\mathbf{s}$.
			The second stream estimates the advantage function $\mathcal{Y}(\mathbf{s}, a)$ that demonstrates the importance of action $a$ compared to other actions at state $\mathbf{s}$.
				
		Note that the state-action value function $\mathcal{Q}(\mathbf{s},a)$ (namely Q-function) expresses the value of taking an action $a$ at state $\mathbf{s}$, i.e., Q-value.
			Thus, the advantage function under policy $\pi$ can be given as 
			$\mathcal{Y}^\pi(\mathbf{s},a) = \mathcal{Q}^\pi(\mathbf{s},a) - \mathcal{V}^\pi(\mathbf{s})$~\cite{wang_dueling_2016}.
			Then, we can obtain the estimated Q-function for feeding state $\mathbf{s}$ to the \textit{i-AC}'s Deep Dueling Neural Network (DDNN) as follows:
			\begin{equation}
				\label{eq:recontruct_Qfunction}
				\mathcal{Q}(\mathbf{s},a; \zeta,\beta) = \mathcal{V}(\mathbf{s}; \zeta) + \mathcal{Y}(\mathbf{s},a; \beta),		
			\end{equation}
			where $\zeta$ and $\beta$ are the parameters of the state-value and advantage streams, respectively.
			It can be observed that given $\mathcal{Q}$, $\mathcal{V}$ and $\mathcal{Y}$ could not be determined uniquely.
			For instance, $\mathcal{Q}$ is unchanged if $\mathcal{Y}$ decreases the same amount that $\mathcal{V}$ increases.  
			As such, using \eqref{eq:recontruct_Qfunction} directly may result in a poor performance of the algorithm.
			Therefore, similar to~\cite{wang_dueling_2016}, we propose to use the following output of the advantage stream:
			\begin{equation}
				\label{eq:recontruct_Qfunction_max}
				\mathcal{Q}(\mathbf{s},a; \zeta,\beta) = \mathcal{V}(\mathbf{s}; \zeta) + \Big(\mathcal{Y}(\mathbf{s},a; \beta) - \max_{a' \in \mathcal{A}_\mathbf{s}} \mathcal{Y}(\mathbf{s},a';\beta)\Big).		
			\end{equation}
		In this way, for the optimal action $a^*$ at state $\mathbf{s}$, i.e., 
			\[a^* =  \underset{a \in \mathcal{A}_\mathbf{s}}{\text{argmax}}~\mathcal{Q}(\mathbf{s},a; \zeta,\beta) = \underset{a \in \mathcal{A}_\mathbf{s}}{\text{argmax}}~\mathcal{Y}(\mathbf{s},a; \beta),\]
			the 	 
			$\mathcal{Q}(\mathbf{s},a^*; \zeta,\beta)$ is forced to equal $\mathcal{V}(\mathbf{s},\zeta)$.
			However, \eqref{eq:recontruct_Qfunction_max} still faces an issue, i.e., the advantage function changes at the same speed at which the advantage of the predicted optimal action changes, making the estimation of Q-values unstable.
			To address this issue, the max operator is replaced by the mean as follows~\cite{wang_dueling_2016}: 
			\begin{equation}
				\label{eq:recontruct_Qfunction_mean}
				\mathcal{Q}(\mathbf{s},a; \zeta,\beta) = \mathcal{V}(\mathbf{s}; \zeta) + \Big(\mathcal{Y}(\mathbf{s},a; \beta) - \frac{1}{|\mathcal{A}_\mathbf{s}|} \sum_{a'\in\mathcal{A}_\mathbf{s}}\mathcal{Y}(\mathbf{s},a';\beta)\Big).		
			\end{equation}
		
		The root of the overestimation problem in Q-learning and deep Q-learning comes from the max operation when estimating the target Q-value at time $t$ as follows~\cite{hasselt_doubledeep_2016}:
		\begin{equation}
			\label{eq:estimatin_Qvalues}
			\begin{aligned}
				H_t = r_t(\mathbf{s}_t, a_t) + \gamma\max_{a_{t+1}} \mathcal{Q}(\mathbf{s}_{t+1}, a_{t+1}),
			\end{aligned}
		\end{equation}
		where $\gamma$ is the discount factor indicating the importance of future rewards. 
		To handle this issue, we adopt the deep double Q-learning algorithm~\cite{hasselt_doubledeep_2016} that leverages two identical deep dueling neural networks.
			One deep neural network is for action selection, namely Q-network $\mathcal{Q}$, and the other is for action evaluation, namely target Q-network $\hat{\mathcal{Q}}$.
			Then, the target Q-value is computed as follows:
			\begin{equation}
				\label{eq:targetDQN}
				H_t = r_t + \gamma \hat{\mathcal{Q}}\big(\mathbf{s}_{t+1}, \underset{a}{\operatorname{argmax}}\mathcal{Q}(\mathbf{s}_{t+1},a; \phi_t, );\phi^-_t\big),
			\end{equation}
			where $\phi$ and $\phi^-$ are the parameters of the Q-network and target Q-network, respectively.
		
		As the aim of training the Q-network is to minimize the gap between the target Q-value and the current estimated Q-value, the loss function at time $t$ is given as follows:
		\begin{equation}
			\begin{aligned}
				\label{eq:lossfunction}
				\mathcal{L}_t(\phi_t) = \mathbb{E}_{(\mathbf{s},a,r,\mathbf{s}')}\bigg[ \bigg( H_t
				-\mathcal{Q}(\mathbf{s},a;\phi_t)\bigg)^2\bigg],
			\end{aligned}
		\end{equation}
			where $\mathbb{E}[\cdot]$ is the expectation according to a data point \mbox{$(\mathbf{s},a,r,\mathbf{s}')$} in the buffer \textbf{B}.
			The loss function $\mathcal{L}_t(\phi_t)$ can be minimized by the Gradient Descent (GD), which is one of the most popular algorithms for minimizing deep learning loss functions due to its simplicity in implementation~\cite{du2019gradient}.
			However, GD requires calculating the gradient and cost function for all data points at each time step, yielding a very high processing time for large data.
		
	To that end, we propose to use the Stochastic Gradient Descent (SGD) that only needs to compute gradients and cost function for a mini-batch data, i.e., $\mathcal{D}$, sampled uniformly from the buffer \textbf{B}.
		By doing so, SGD can accelerate the convergence  rate while still guaranteeing the convergence of learning~\cite{robbins1951stochastic}. 
		Specifically, the cost function of SGD is defined as follows:
		\begin{equation}
			\begin{aligned}
				\label{eq:costfunction}
				J_t(\phi_t) = \frac{1}{|\mathcal{D}|}\sum_{(\mathbf{s},a,r,\mathbf{s}') \in \mathcal{D}}\mathcal{L}_t(\phi_t).
			\end{aligned}
		\end{equation}
		At each step, Q-network's parameters are updated as follows:
		\begin{equation}
			\label{eq:GDupdate}
			\phi_{t+1} = \phi_t -\alpha_t \nabla_{\phi_t} J_t(\phi_t),
		\end{equation}  
		where $\nabla_{\phi_t}(\cdot)$ is the gradient of the cost function according to the current Q-network's parameters $\phi_t$ and $\alpha_t$ is the step size controlling the rate of updating the Q-network's parameters.
		It is worth mentioning that even though the target Q-value $H_t$ in~\eqref{eq:lossfunction} looks like labels that are used in the supervised learning, $H_t$ is not fixed before starting the learning process.
		Moreover, it changes at the same rate as that of the target Q-network's parameters, possibly leading to instability in the learning process.
		To that end, instead of updating the parameters of $\hat{\mathcal{Q}}$ at every time step, $\phi^-$ is only updated by copying from $\phi$ at every $U$ steps.  
	
	The computational complexity of the proposed iMSAC algorithm is mainly determined by the training phase of the Q-network. 
		In this DNN, input and hidden layers have $Y$ and $L$ neurons, respectively. The output layer consists of the value stream with $B$ neurons and the advantage stream with $C$ neurons.
		Since training the Q-network at each iteration is implemented with matrix multiplication~\cite{goodfellow2016deep}, the computational complexity of iMSAC for training with a data-point is $\mathcal{O}(YL + LB + LC)$.
		Given the training phase takes $T$ iterations, each with a batch of experiences with size $Z$, the computational complexity of iMSAC is $\mathcal{O}(TZ(YL + LB + LC))$.

\section{PERFORMANCE EVALUATION}
	\label{sec:results}
	\subsection{Simulation Parameters}
	\label{subsec:parameters}
	The parameters for our simulation are set as follows, unless otherwise stated.
		We consider that the system supports up to nine types of functions, i.e., \mbox{$F\!=\!9$}.
		Each MetaSlice consists of three different functions and belongs to one of three classes, i.e., class-1, class-2, and class-3.	
		In the configuration set $\mathcal{F}$, we set \mbox{$K\!=\!1$}.	
		Here, we set $\lambda_1$, $\lambda_2$, and $\lambda_3$ to $60$, $40$, and $25$ requests/hour, respectively, and its vector is denoted by \mbox{$\boldsymbol{\lambda} = [60,40,25]$}.
		The average MetaSlice session time is $30$ minutes, i.e, $\mu_i=2,~\forall i\in \{1,2,3\}$.
		The immediate reward $r_i$ for accepting a request from class-1, class-2, and class-3 are 1, 2, and 4, respectively.
		Note that our proposed algorithm, i.e., iMSAC, does not require the above information in advance.
		It can adjust the admission policy according to the practical requirements and demands (e.g., rental fees, arrival rate, and total resources of the system) to maximize a long-term average reward.
		Therefore, without loss of generality, the system has three types of resources, i.e., computing, storage, and radio bandwidth.
		Each Metaverse function is assumed to require a similar amount of resources as those of the Network Slice in 5G Network Slicing~\cite{ghina_an_2019}, e.g., $40$ GB for storage, a bandwidth of $40$ MHz, and $40$ GFLOPS/s for computing.	

	In our proposed algorithm, i.e., iMSAC, the settings are as follows.
		For the $\epsilon$-greedy policy, the value of $\epsilon$ is gradually decreased from $1$ to $0.01$.
		The discount factor $\gamma$ is set to $0.9$.
		We use Pytorch to build the Q-network and the target Q-network.
		They have the same architecture as shown in Fig.~\ref{fig:dueling_architecture}.
		During the learning process, typical hyperparameters of DNN are selected as those in~\cite{mnih_human_2015} and \cite{hasselt_doubledeep_2016}, e.g., the learning rate of the Q-network is set at $10^{-3}$ and the target-Q network's parameters are copied from the parameters of Q-network at every $10^4$ steps. 	
			
	Recall that our proposed solution consists of two important elements, i.e., the intelligent algorithm iMSAC and MetaSlice analysis with the MetaInstance technique (MiT). 
		With MetaInstance, functions can be reused, leading to a significant improvement in resource utilization.
		Meanwhile, the iMSAC can help the Admission Controller to obtain an optimal policy without requiring the complete information about the arrival and departure of MetaSlice in advance.
		Therefore, we compare our proposed solution, i.e., iMSAC+MiT, with three counterpart approaches: (i) iMSAC, (ii) Greedy policy~\cite{sutton2018reinforcement} where the MetaSlicing's Admission Controller accepts a request if the system has enough resources for the request, and (iii) Greedy policy with the MetaInstance technique, i.e., Greedy+MiT.		
	\begin{figure}[t]
		\centering
		\includegraphics[width=0.8\linewidth]{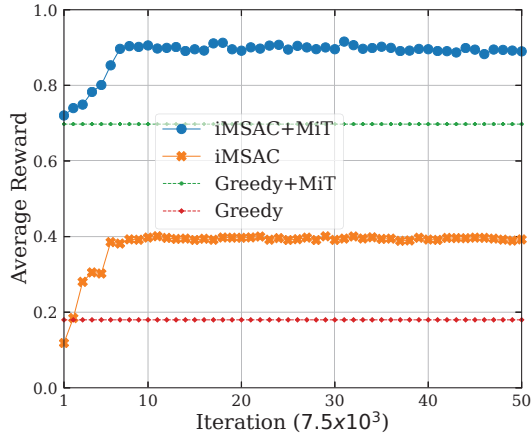} 
		\caption{Convergence rate of iMSAC.}
		\label{fig:convergene}
	\end{figure}
	\begin{figure*}[t]
		\centering
		$\begin{array}{cccc}
			\begin{array}{ccc}
				\includegraphics[width=0.25\linewidth]{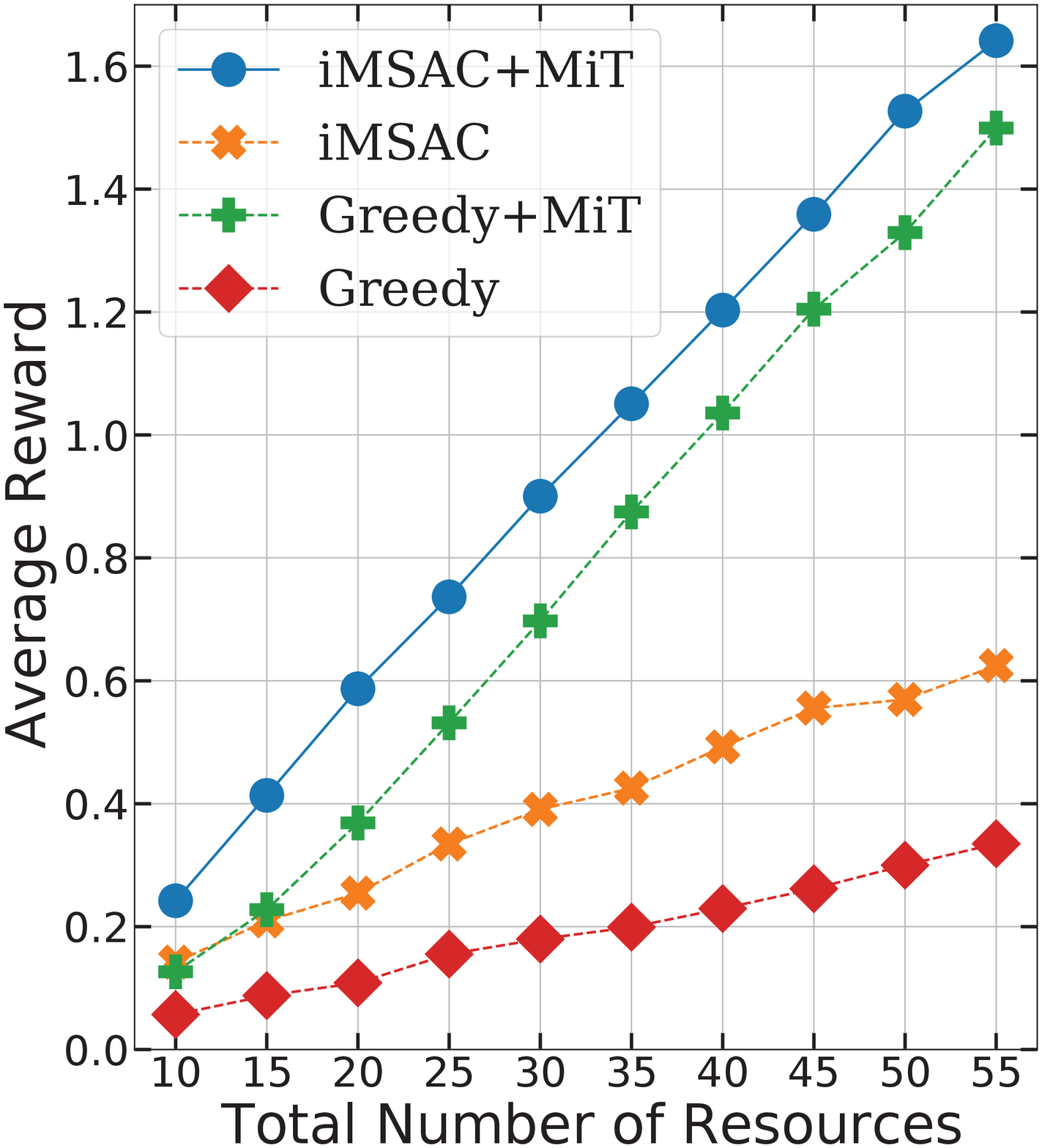}
				&\includegraphics[width=0.25\linewidth]{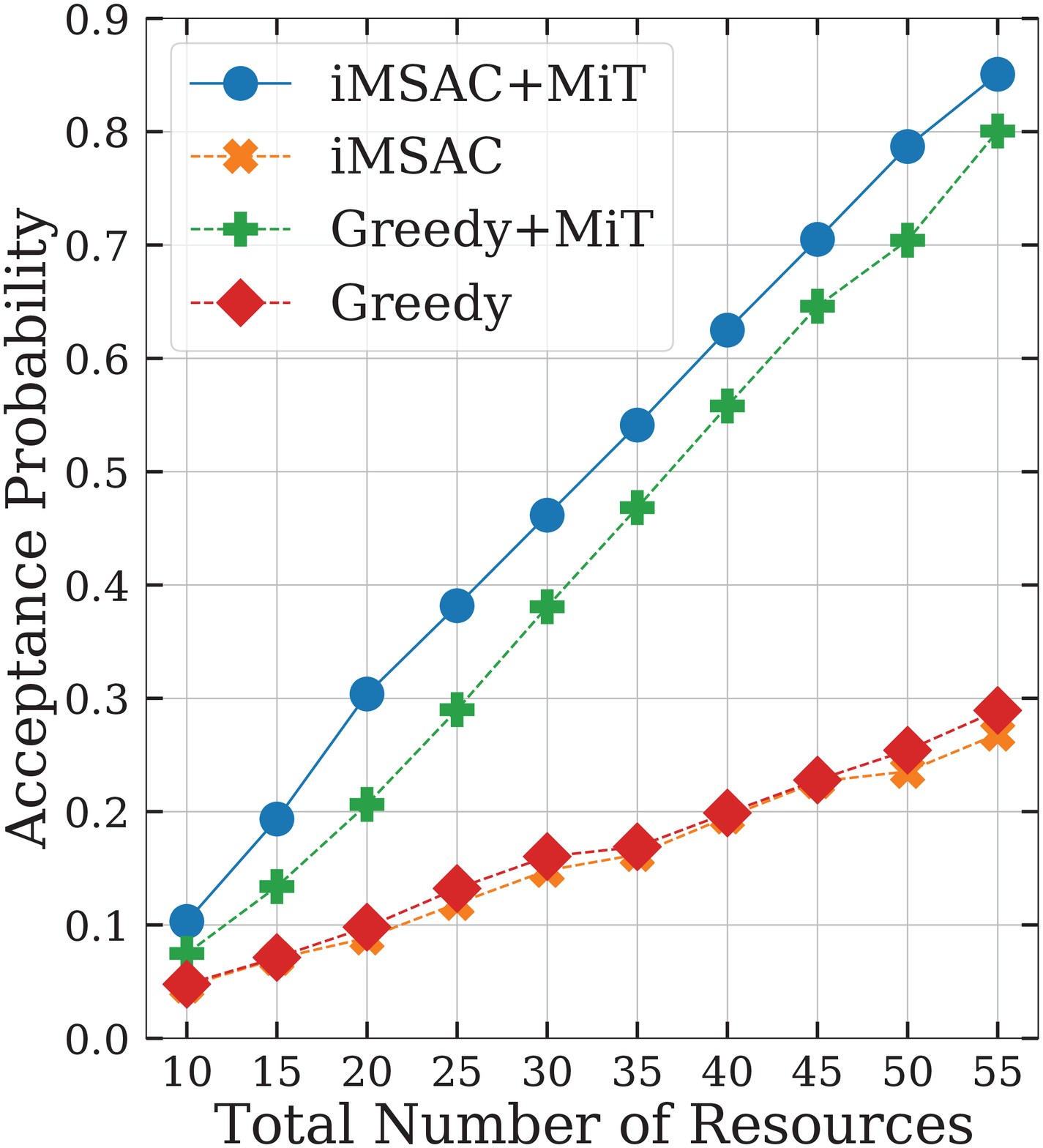} 
				&\includegraphics[width=0.25\linewidth]{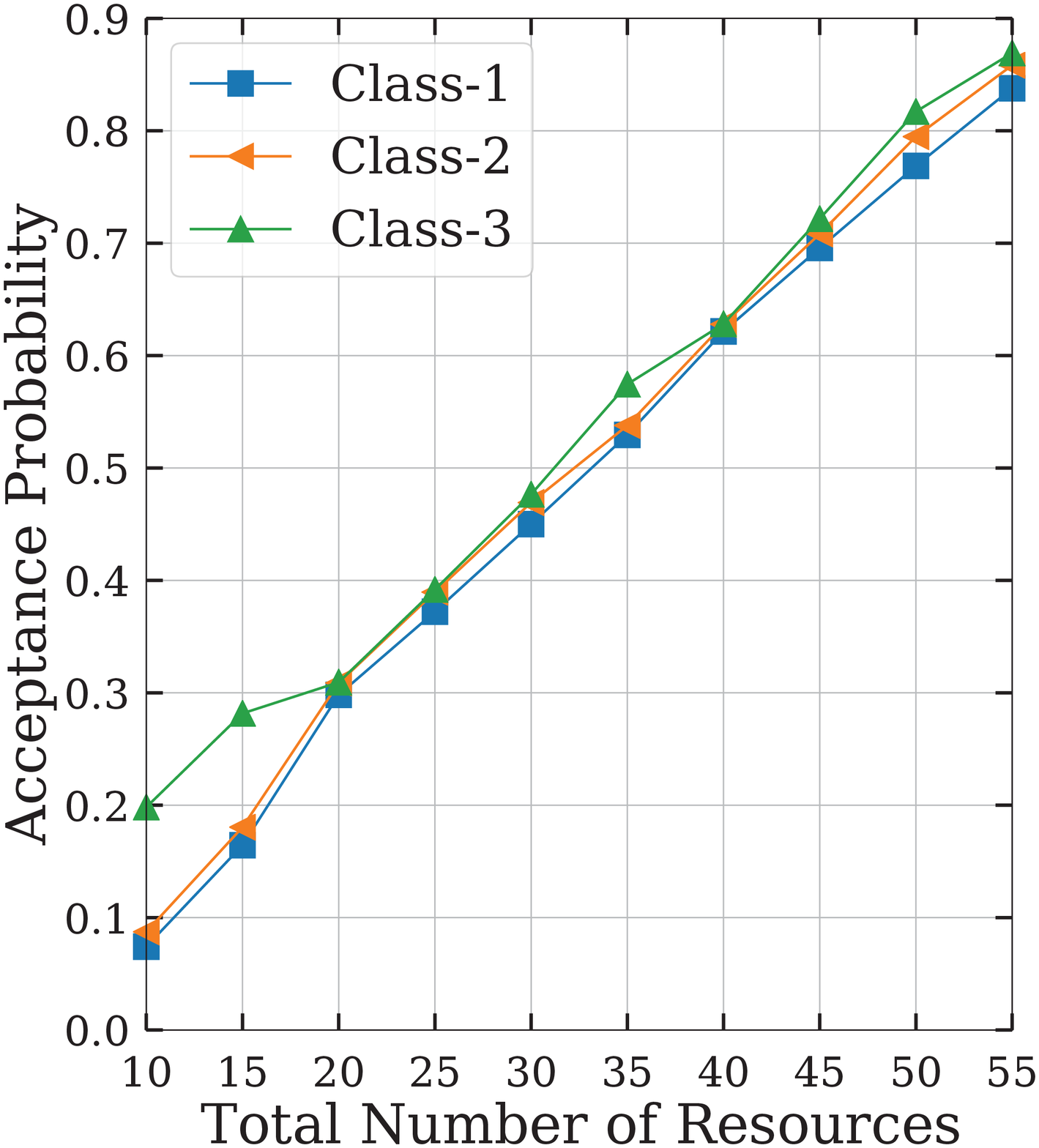}\\
				\text{(a) Average rewards}&\text{(b) Acceptance probability}&\text{(c) iMSAC+MiT}	
			\end{array}\\	
			\begin{array}{ccc}
				\includegraphics[width=0.25\linewidth]{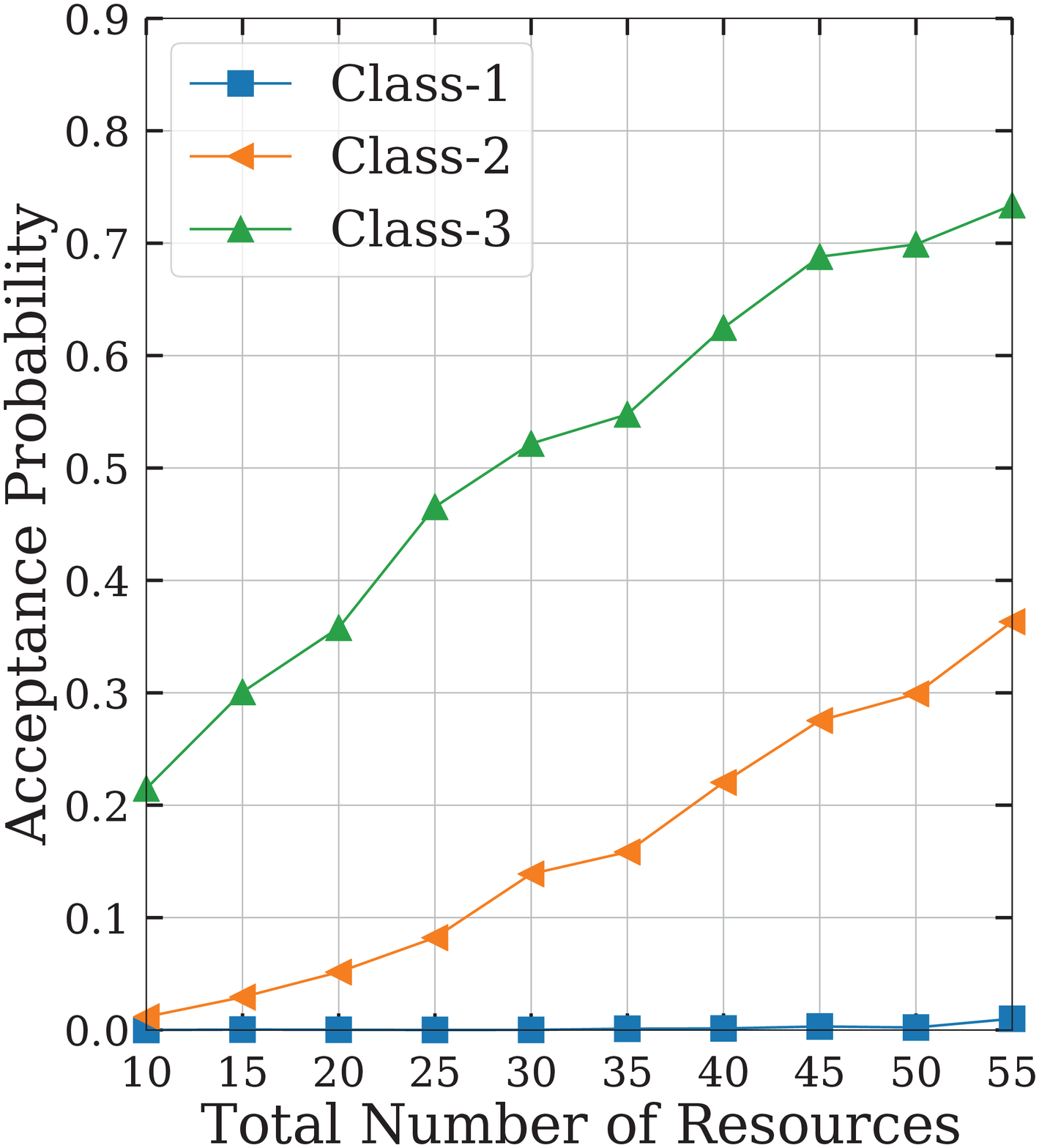}
				&\includegraphics[width=0.25\linewidth]{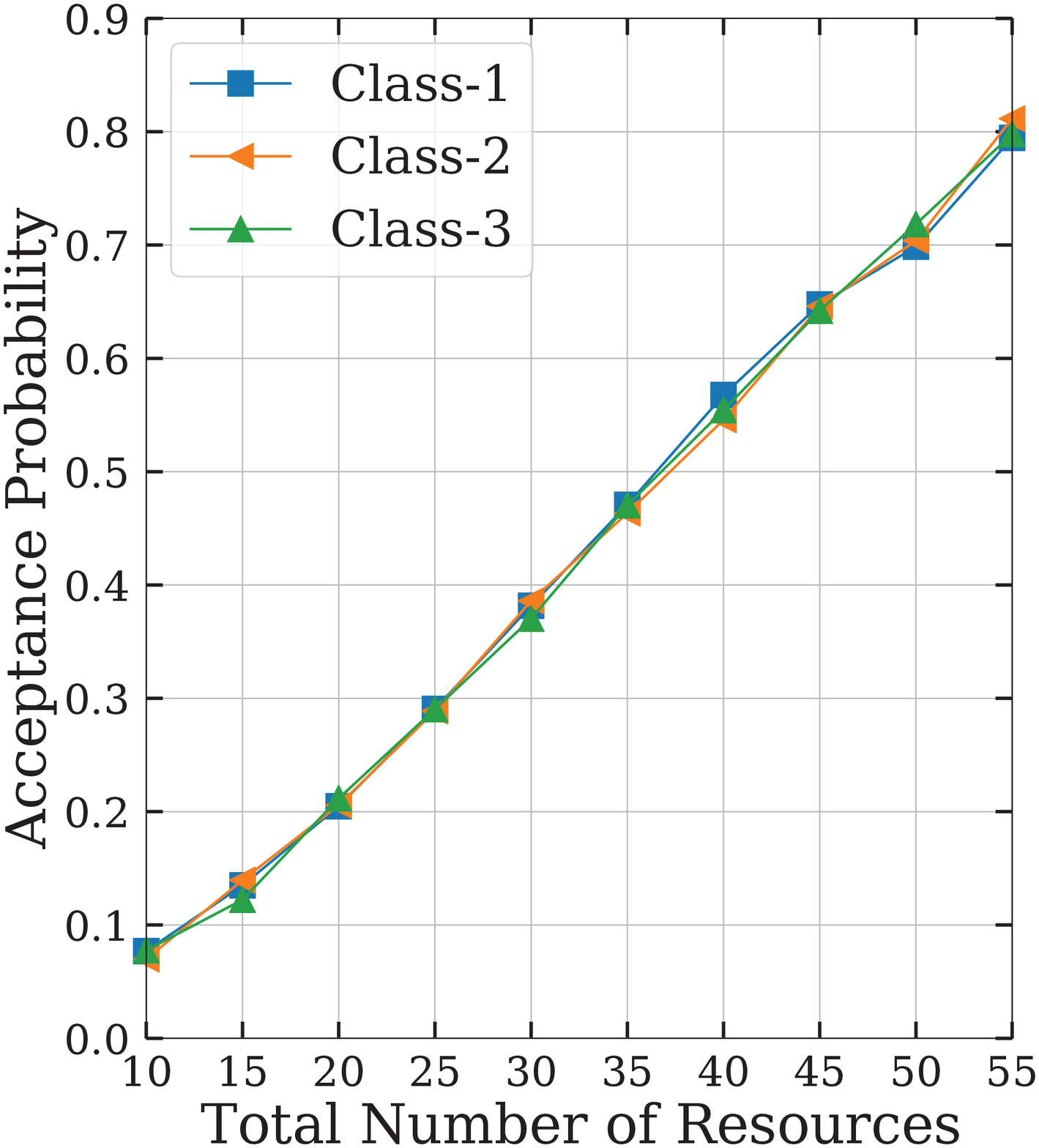} 
				&\includegraphics[width=0.25\linewidth]{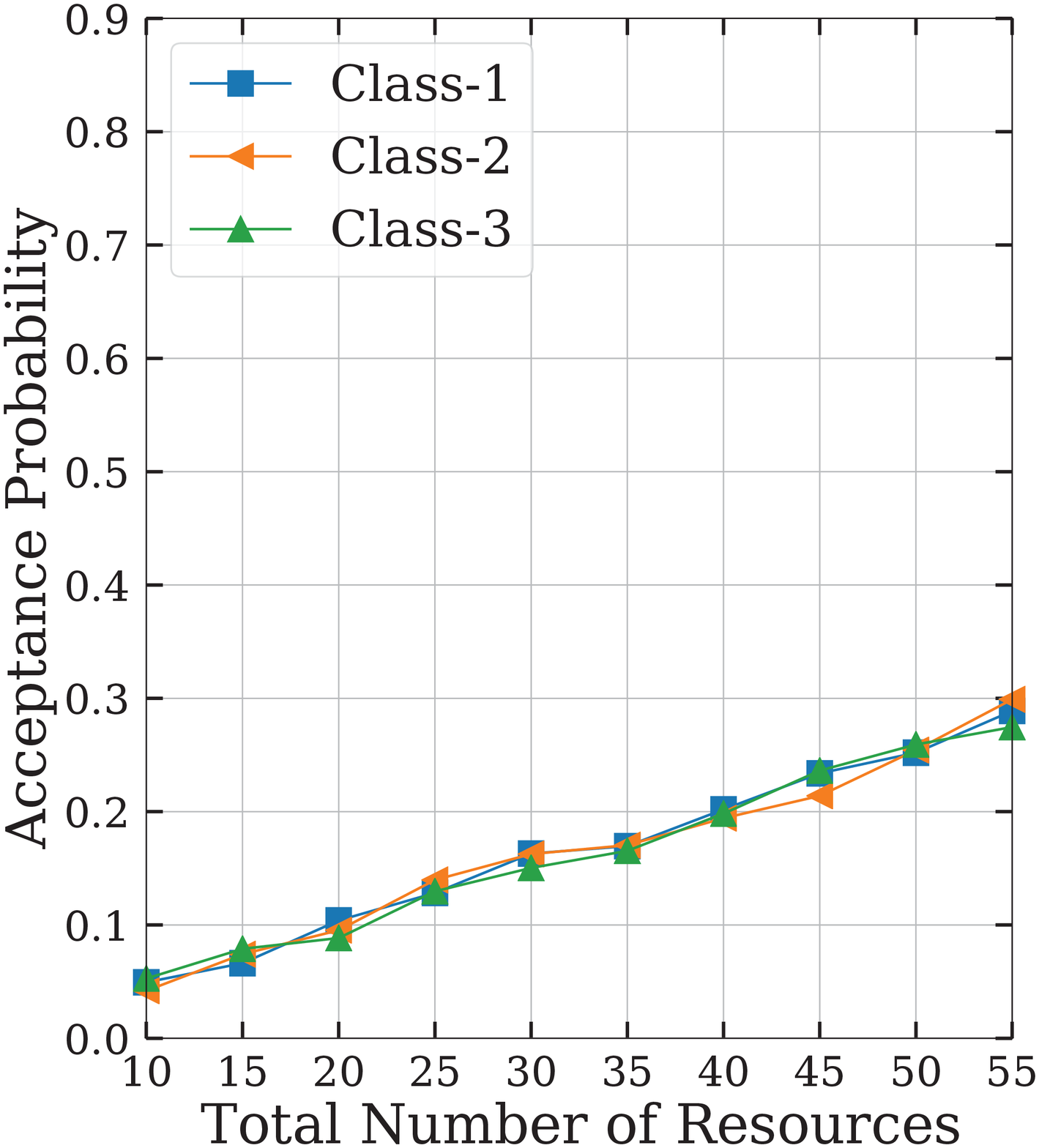}\\
				\text{(d) iMSAC} &\text{(e) Greedy+MiT} &\text{(f) Greedy}
			\end{array}
		\end{array}$
		\caption{Vary the total number of system resources.}
		\label{fig:vary_vms}
	\end{figure*}

	\subsection{Simulation Results}
	Simulations are conducted to gain insights into our proposed solution, i.e., iMSAC+MiT.
		First, we will investigate the convergence rate of our proposed algorithm iMSAC.
		Then, we evaluate the proposed solution in different scenarios to study impacts of important system parameters, e.g., the available system resources, immediate rewards reflecting the revenue of the MISP, and the maximum number of MetaSlices sharing one function that is one of the most important parameters of the MISP.
	
	\subsubsection{Convergence Rate} 
	\label{subsubsec:cvg}

	Figure~\ref{fig:convergene} shows the convergence rates of our proposed iMSAC algorithm in two scenarios with and without the MetaInstance technique. 
		In this experiment, we set storage, radio bandwidth, and computing resources to $1200$ GB, $1200$ MHz, and $1200$ GFLOPS/s, respectively.
		In other words, the system can support up to $30$ functions in total.
		The average rewards obtained by Greedy+MiT and Greedy are also presented for comparisons. 
		Specifically, the learning curves of iMSAC+MiT and iMSAC have a very similar trend.
		As shown in Fig.~\ref{fig:convergene}, both of them gradually converge to the optimal policy after~\mbox{$6\!\times\!10^4$} iterations.
		However, the iMSAC+MiT's average reward is stable at $0.9$, which is $2.25$ times greater than that of the iMSAC.
		Similarly, the Greedy+MiT's average reward is much greater (i.e., $3.5$ times) than that of the Greedy. 
		Thus, these results clearly show the benefits of the iMSAC and the MetaInstance technique.
		In particular, while the MetaInstance can help to maximize the resource utilization for the system, the iMSAC can make the Admission Controller learn the optimal policy to maximize the long term average reward.

	\subsubsection{Performance Evaluation}
	\begin{figure*}[t]
		\centering
		$\begin{array}{ccc}
			\includegraphics[width=0.25\linewidth]{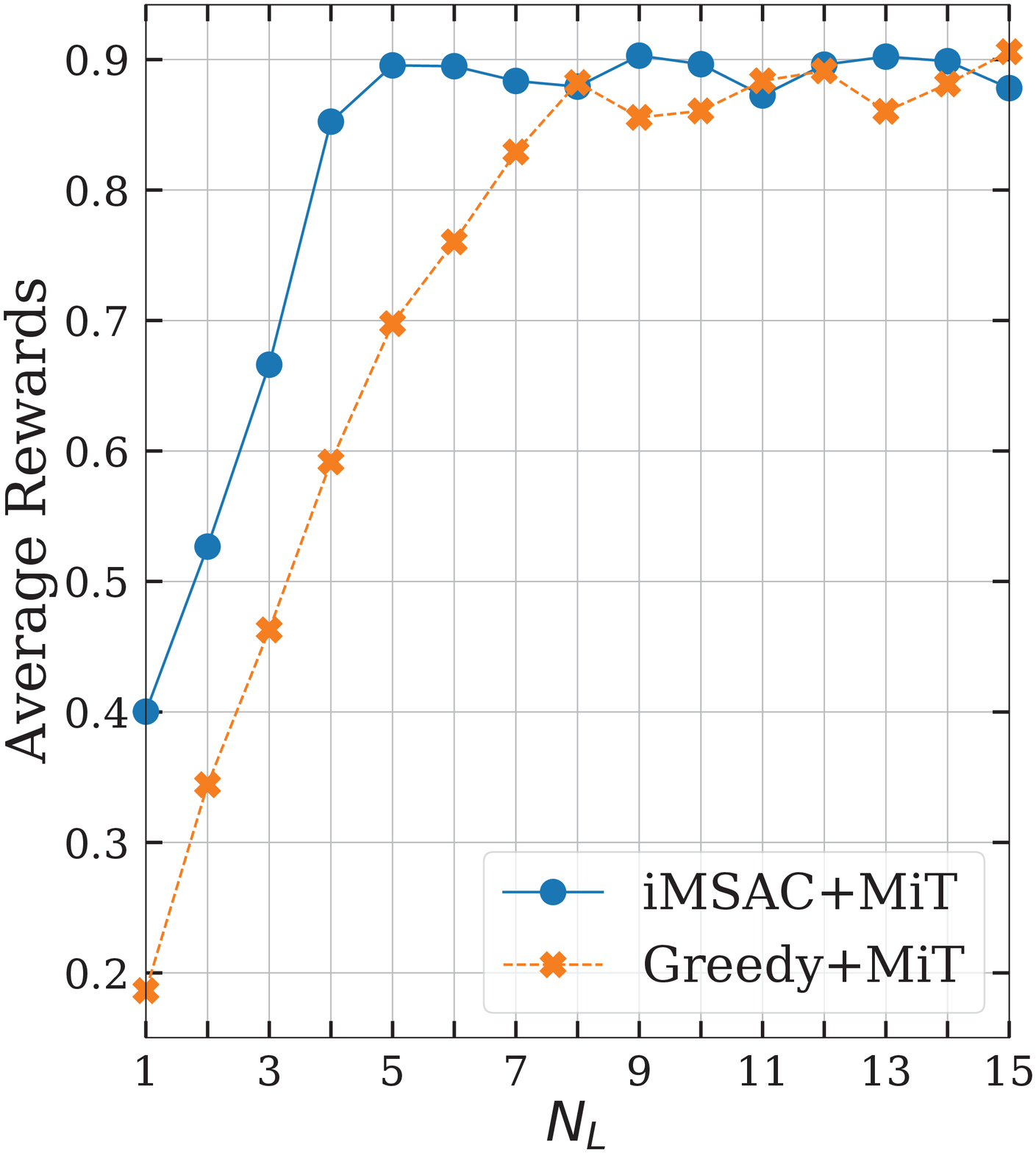}
			&\includegraphics[width=0.25\linewidth]{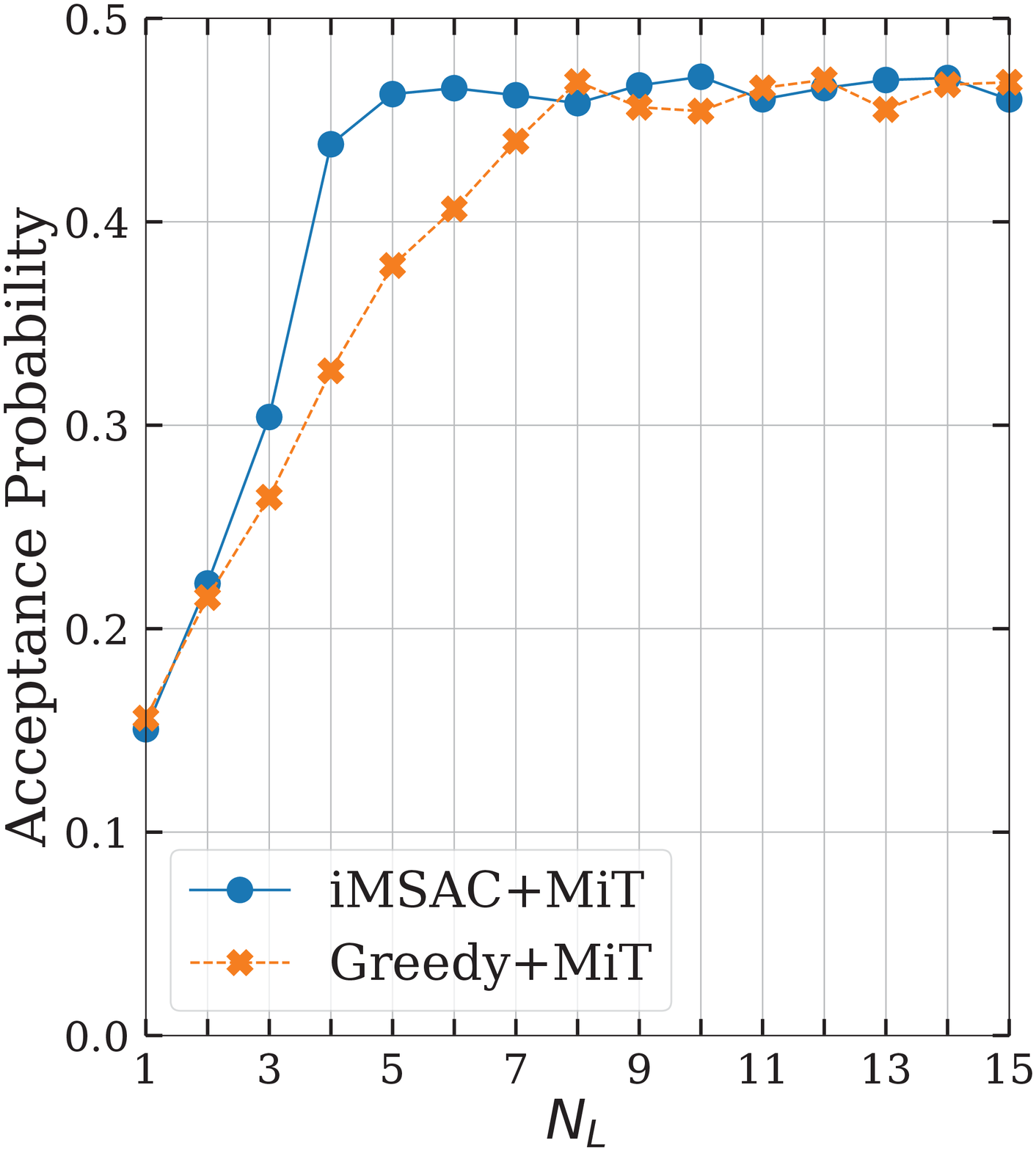} 
			&\includegraphics[width=0.25\linewidth]{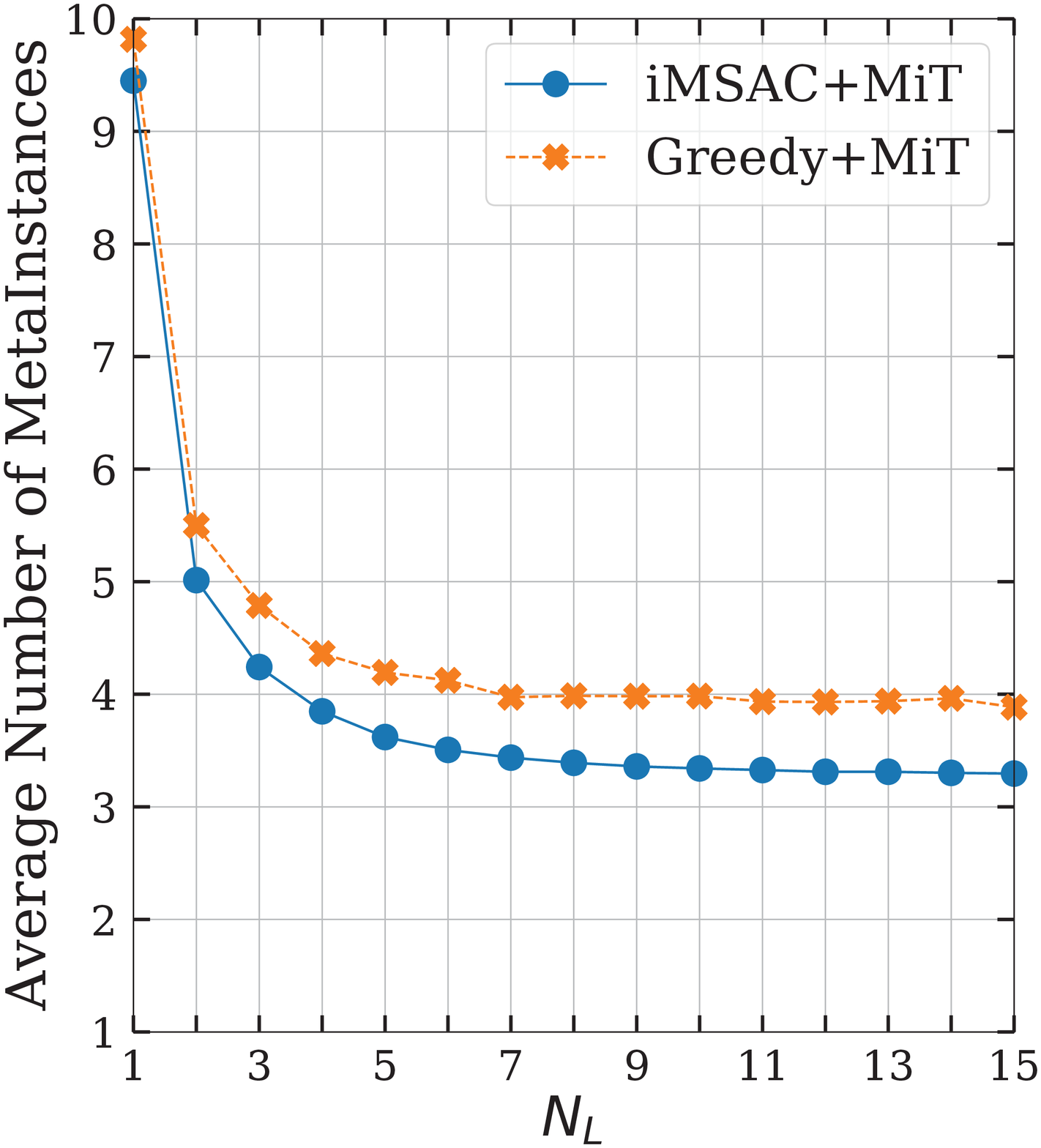}\\
			\text{(a) Average rewards}&\text{(b) Acceptance probability}&\text{(c) Average number of MetaInstances}
		\end{array}$
		\caption{Varying $N_L$}
		\label{fig:vary_app_limit}
	\end{figure*}
	
	\begin{figure*}[t]
		\centering
		$\begin{array}{cccc}
			\includegraphics[width=0.25\linewidth]{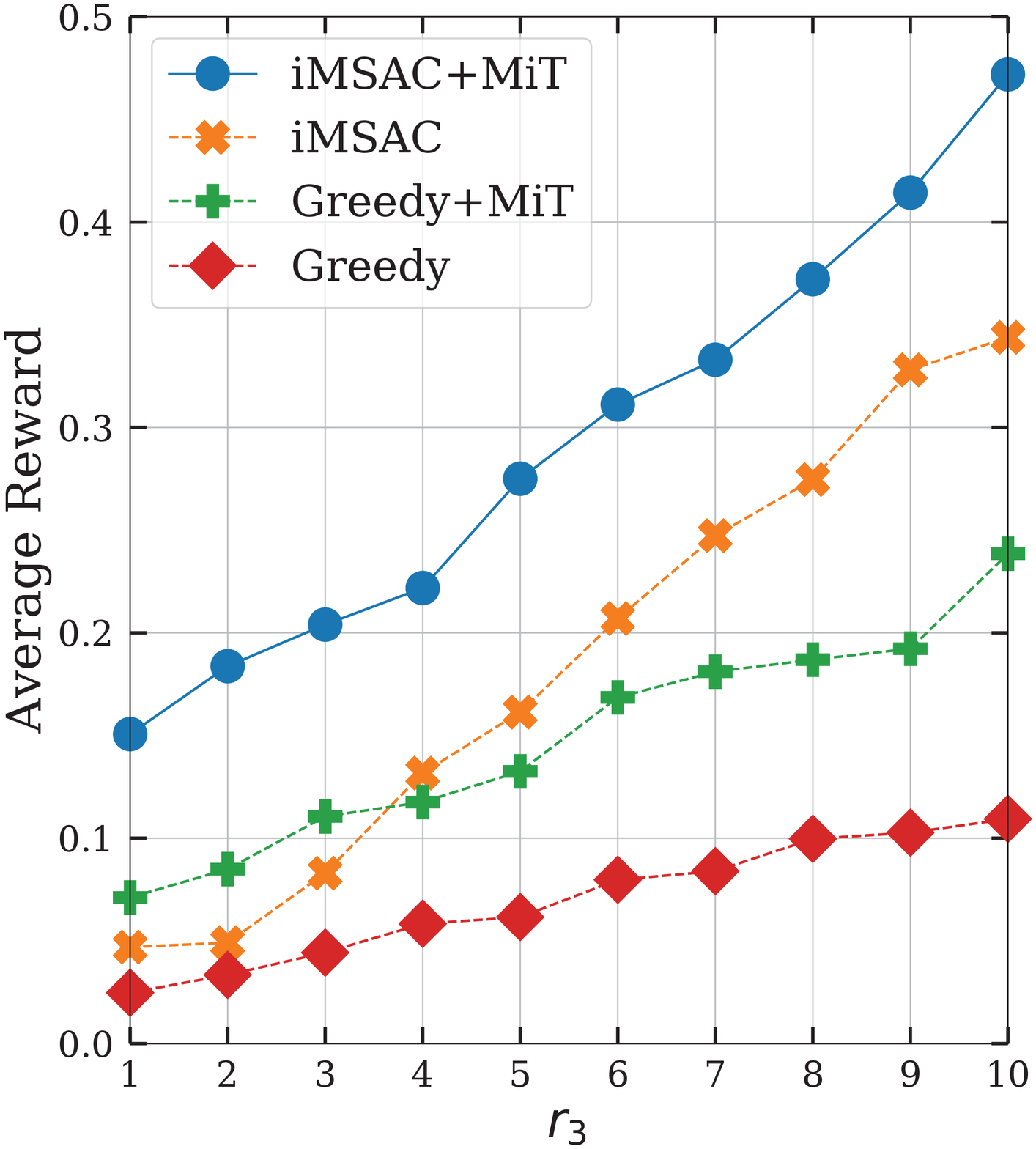}
			&\includegraphics[width=0.25\linewidth]{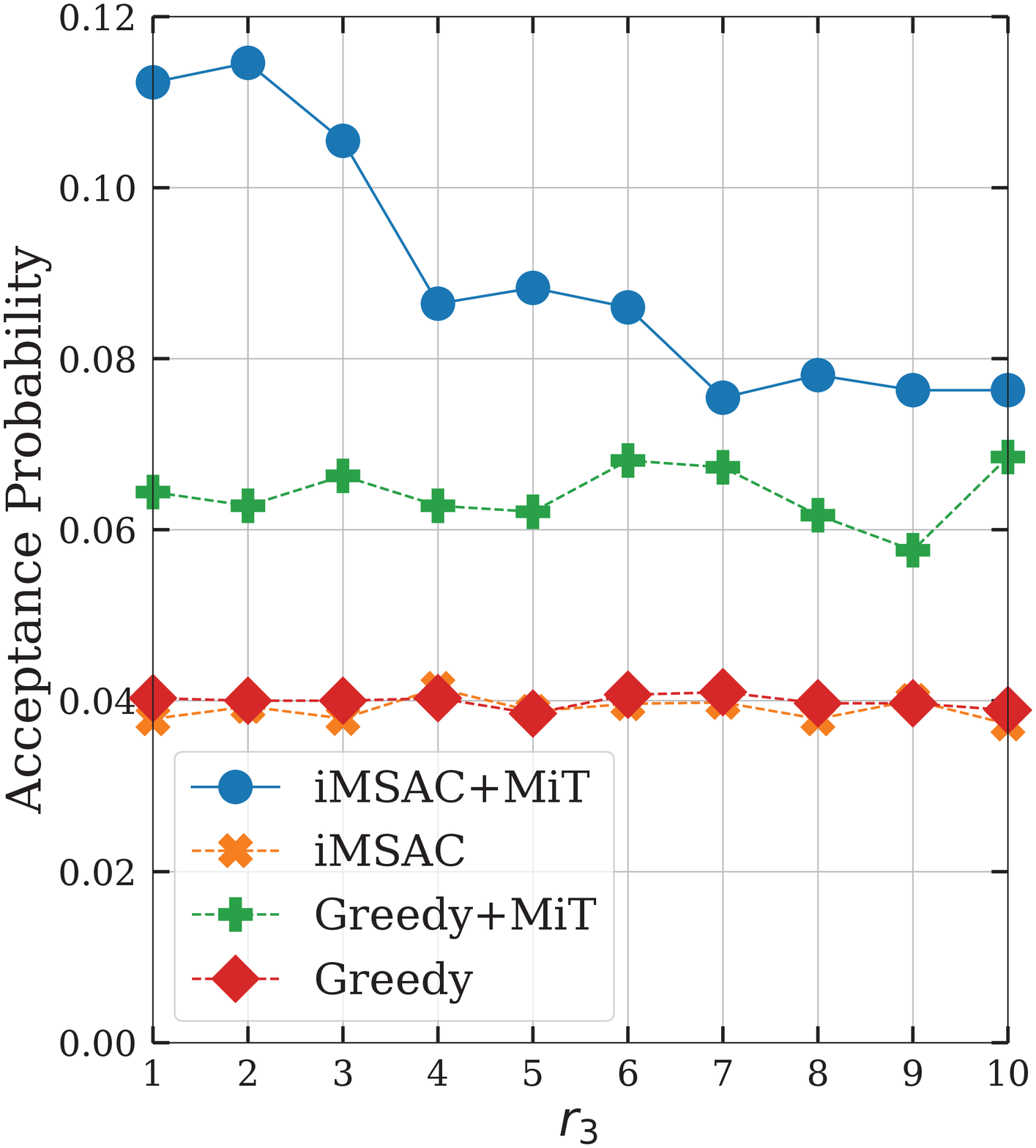}
			&\includegraphics[width=0.25\linewidth]{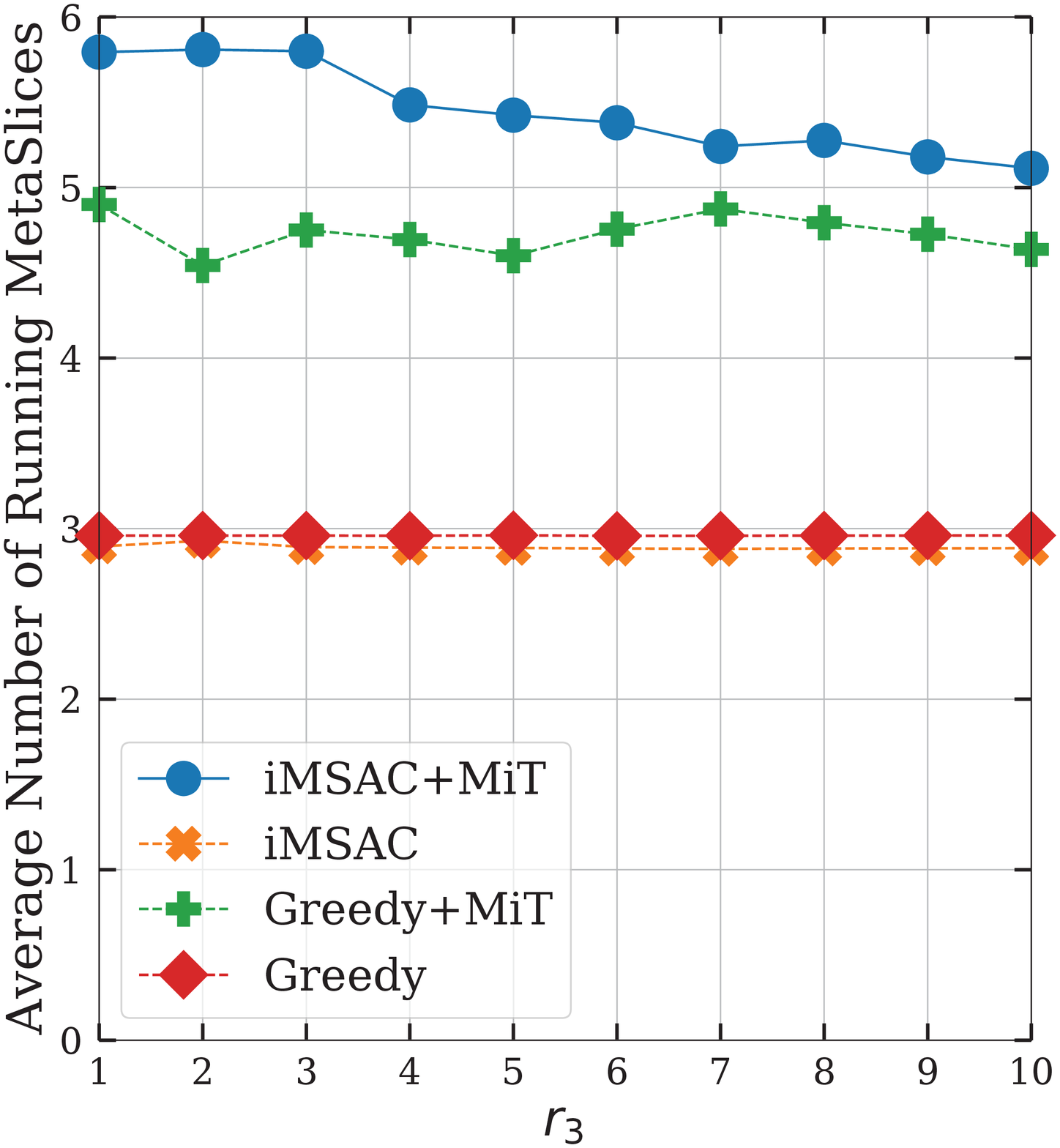}\\
			\text{(a) Average rewards}&\text{(b)  Acceptance probability }&\text{(c) Average number of running MetaSlices}
		\end{array}$
		\caption{Vary the immediate reward of class-3.}
		\label{fig:vary_fee1}
	\end{figure*}	
	We now investigate the robustness of our proposed solution, i.e.,  iMSAC+MiT, in different scenarios.
		First, we vary the storage, radio, and computing resources from $400$ GB, $400$ MHz, and $400$ GFLOPS/s to $2200$ GB, $2200$ MHz, and $2200$ GFLOPS/s, respectively.
		In other words, the total number of functions supported by the system is varied from $10$ to $55$.
		The policies of iMSAC+MiT and iMSAC are obtained after $3.75\!\times\!10^5$ learning iterations. 
		In this scenario, two metrics for evaluating the Admission Controller's performance are average reward and acceptance probability since they clearly show the effectiveness of the admission policy in terms of the income for MISP (i.e., average reward) and the service availability for end-users.
		Figure~\ref{fig:vary_vms}(a) clearly shows that as the total amount of system resources increases, the average rewards of all approaches increase.
		This is due to the fact that the higher the total resource is, the higher number of MetaSlice the system can host, and thus the greater revenue the system can achieve.  
		It is observed that the proposed algorithm iMSAC+MiT always obtains the highest average reward, up to $80\%$ greater than that of the second-best policy in this scenario, i.e., Greedy+MiT.
		Similarly, Fig.~\ref{fig:vary_vms}(b) shows that iMSAC+MiT achieves the highest acceptance probability for an arriving MetaSlice request, up to $47\%$ greater than that of the Greedy+MiT, i.e., the second-best policy.
		In addition, Figs.~\ref{fig:vary_vms}(a) and (b) demonstrate the benefit of MetaInstance.
		In particular, it helps the system to increase the average rewards and acceptance rates of both algorithms (i.e., iMSAC and Greedy) by up to $396\%$ and  $222\%$, respectively.

	 	\begin{figure*}[t]
	 		\centering
	 		$\begin{array}{cccc}		
	 			\includegraphics[width=0.233\linewidth]{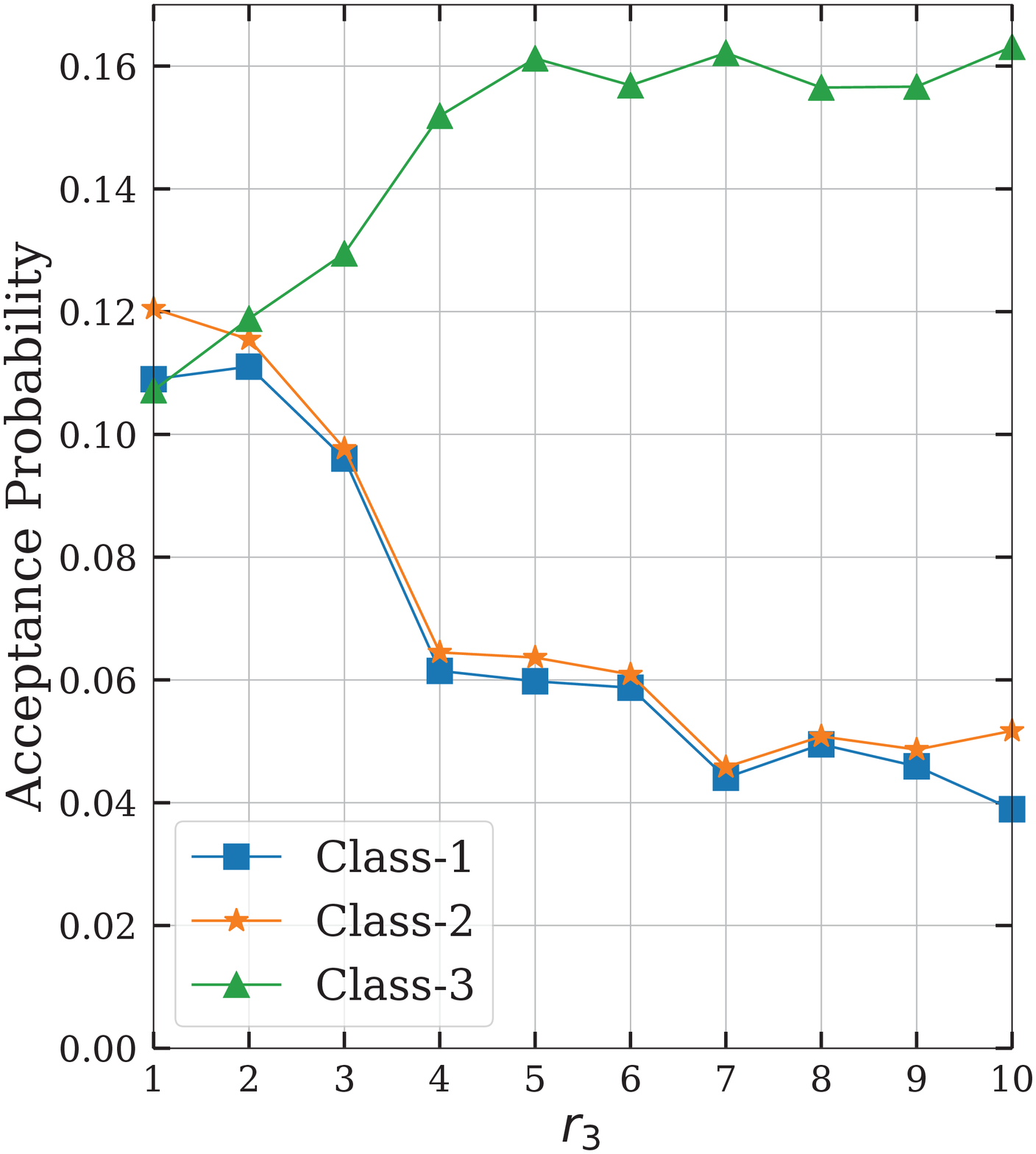}
	 			&\includegraphics[width=0.233\linewidth]{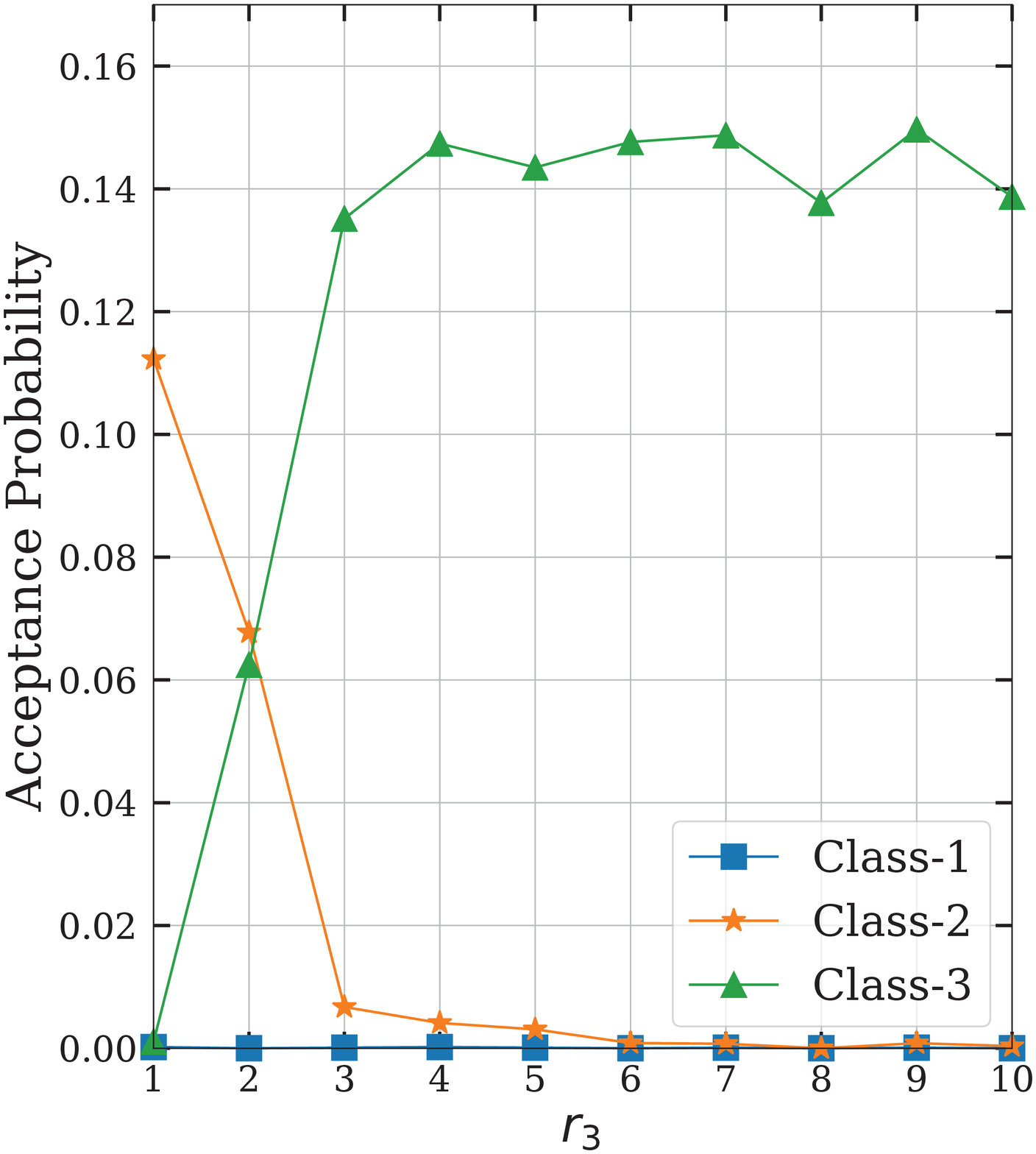} 
	 			&\includegraphics[width=0.233\linewidth]{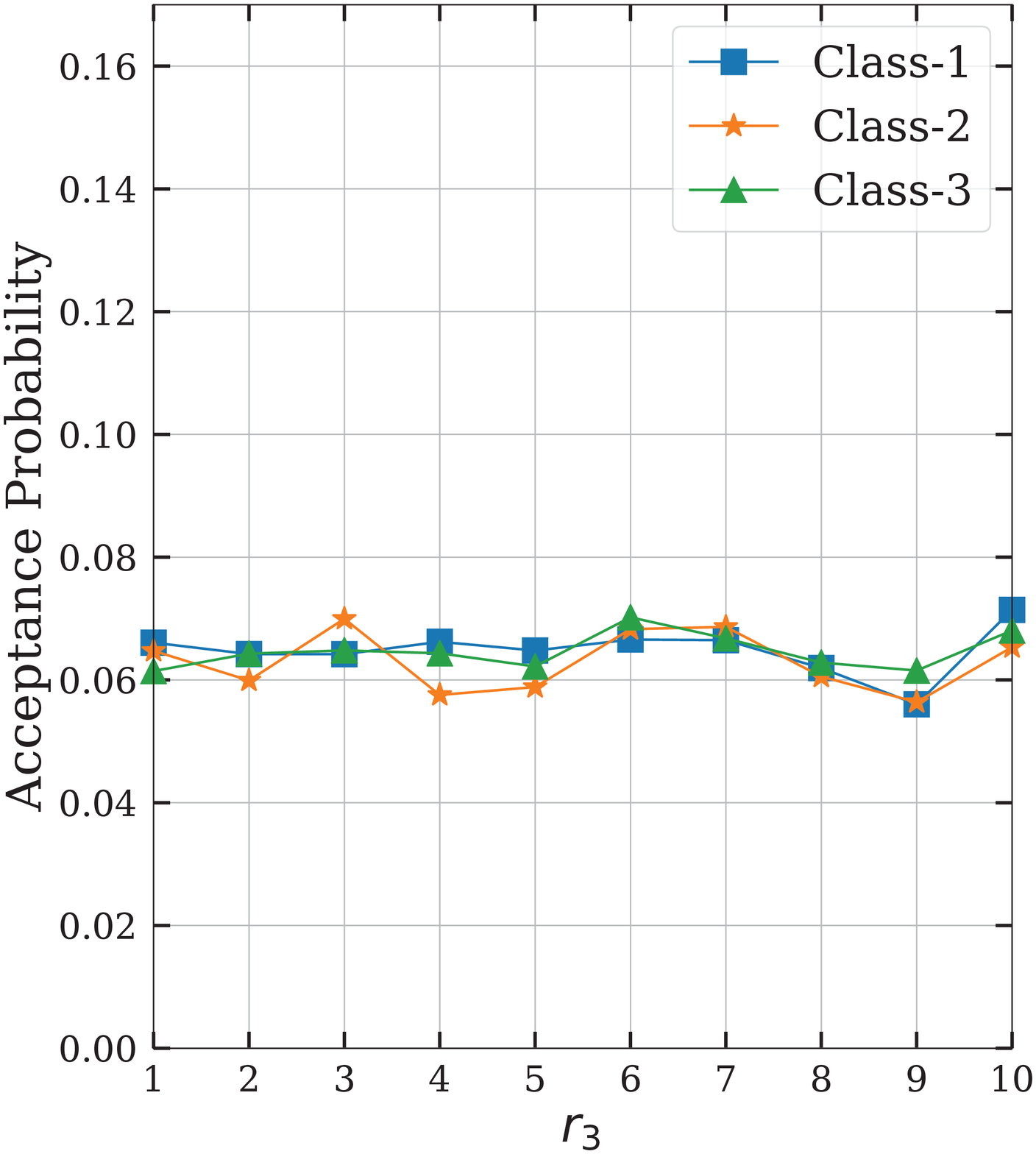} 
	 			&\includegraphics[width=0.233\linewidth]{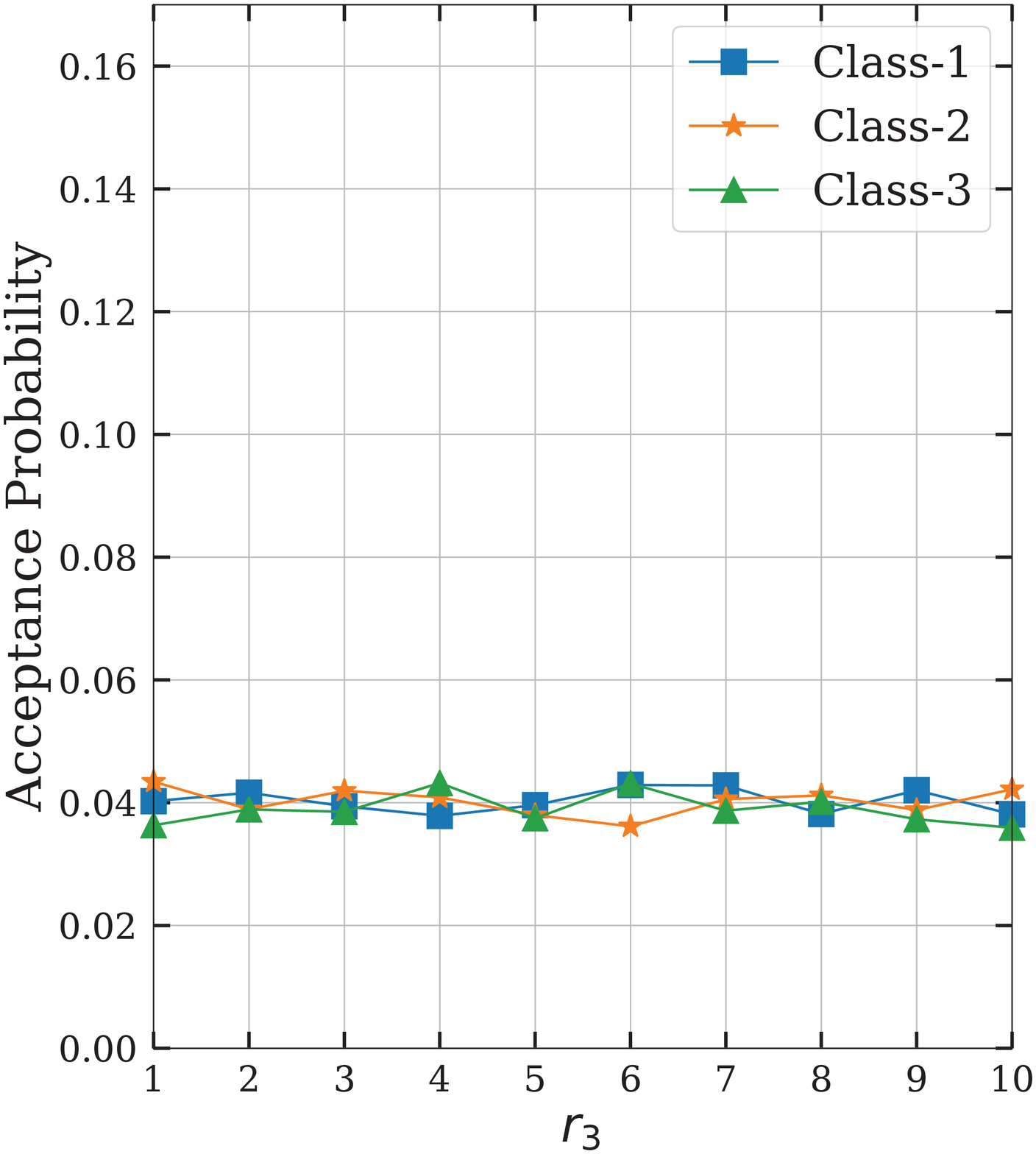}\\
	 			\text{(a) iMSAC+MiT}&\text{(b) iMSAC} &\text{(c) Greedy+MiT}&\text{(d) Greedy}
	 		\end{array}$
	 		\caption{The acceptance probability per class when varying the immediate reward of class-3.}
	 		\label{fig:vary_fee2}
	 	\end{figure*}
	 To gain more insights, we look further at the acceptance probability for each class of MetaSlice.
	 	As shown in Figs.~\ref{fig:vary_vms}(c) and (d), for the iMSAC+MiT and iMSAC, the acceptance probabilities of class-3 are always higher than that of other classes when the number of resources increases.
	 	Meanwhile, Greedy+MiT and Greedy accept requests from all the classes at almost the same probability, as depicted in Figs.~\ref{fig:vary_vms}(e), and (f). 	
	 	Recall that the arrival rate of class-3 is the lowest value (i.e., $\lambda_3\!=\!25$), while the immediate reward for accepting requests class-3 is the greatest value, i.e., $r_3\!=\!4$.	 	
	 	Thus, the proposed algorithm iMSAC can learn and adjust its policy to achieve the best result.
	 	More interestingly, for the iMSAC+MiT results, the acceptance probability of class-3 has significant gaps (up to $50\%$ greater than those of other classes) when the total resources are little (i.e., less than $20$), as shown in Fig.~\ref{fig:vary_vms}.
	 	This stems from the fact that if the available resources are low, the Admission Controller should reserve resources for future requests from class-3 with the highest reward.
	 	In contrast, if the system has more available resources, the Admission Controller should accept requests from all classes more frequently. 
	 	This observation is also shown in Fig.~\ref{fig:vary_vms}(d), where the MetaInstance is not employed. 		

	Next, we investigate one of the most important factors in the MetaSlicing framework, which is the maximum number of MetaSlices that share the same function, denoted by $N_L$.
		In this experiment, we set the resources the same as those in Fig.~\ref{fig:convergene}, and other settings are set the same as those in Section~\ref{subsec:parameters}.
		Figure~\ref{fig:vary_app_limit}(a) shows that as the value of $N_L$ increases from $1$ to $15$, the average rewards obtained by our proposed solution iMSAC+MiT and Greedy+MiT first increase and then stabilize at around $0.88$.
		Remarkably, when the value of $N_L$ is small (i.e., less than $8$), the iMSAC+MiT's average reward is always greater than that of Greedy+MiT, up to $122\%$.
		The reason is that as $N_L$ increases, meaning that more MetaSlices can share the same function, the number of MetaSlices can be deployed in the system increases.
		In other words, the MetaSlicing system capacity increases according to the increase of $N_L$.
		As such, the Admission Controller can accept more requests to obtain greater rewards when $N_L$ increases.
		In addition, the thresholds in average rewards for both approaches originate from the fact that the arrival and departure processes of class $i$ follow the Poison and exponential distributions with a fixed mean ~$\lambda_i$ and $1/\mu_i$.
	
	In terms of the acceptance probability for a MetaSlice request, similar observations can be made in Fig.~\ref{fig:vary_app_limit}(b). 
		Specifically, our proposed solution maintains higher request acceptance probabilities (up to $34\%$) than that of the Greedy+MiT when $N_L$ is less than $8$.
		As $N_L$ increases, the acceptance probabilities of both approaches increase, then they are stable at around $0.46$ when \mbox{$N_L>4$} for iMSAC+MiT and $N_L>7$ for Greedy+MiT.
		The reasons are similar as those in Fig.~\ref{fig:vary_app_limit}(a) 
		In particular, the higher the system capacity is, the higher the request acceptance probability is.
		Unlike the above metrics, the average numbers of MetaInstances decrease for both approaches as the value of $N_L$ increases from $1$ to $15$, as shown in Fig.~\ref{fig:vary_app_limit}(c).
		The reason is that an increase of $N_L$ can result in increasing the number of MetaSlices in a MetaInstance, thereby decreasing the number of MetaInstances in a system with a fixed capacity.
		The above observations in Fig.~\ref{fig:vary_vms} and Fig.~\ref{fig:vary_app_limit} show the superiority of our proposed approach compared with others, especially when the system resources are very limited. 
	
	We continue evaluating our proposed solution in the case where the immediate reward of class-3, i.e., $r_3$,  is varied from $1$ to $10$.
		In this experiment, we set the storage, radio, and computing resources to $400$ GB, $400$ MHz, and $400$ GFLOPS/s, respectively.
		The arrival rate vector of MetaSlice is set to $\boldsymbol{\lambda}\!=\![60, 50, 40]$ to explore the robustness of our proposed solution.
		In Fig.~\ref{fig:vary_fee1}(a), as $r_3$ increases, the average rewards obtained by all approaches increase.
		In particular, the results demonstrate that our proposed solution, i.e., iMSAC+MiT, consistently achieves the highest average reward, up to $111\%$ greater than that of the second-best, i.e., Greedy+MiT when \mbox{$r_3\!=\!1$}.
		Interestingly, when $r_3$ is small (i.e., less than $4$), the iMSAC's average rewards are lower than those of the Greedy+MiT.
		However, when $r_3$ becomes larger than or equal to $4$, the average rewards obtained by iMSAC are higher than those of the Greedy+MiT.
	
	Similarly, Figs.~\ref{fig:vary_fee1}(b) and (c) show that our proposed solution always obtains the highest values compared to those of other approaches in terms of the acceptance probability and average number of running MetaSlices when $r_3$ increases from $1$ to $10$.
		Interestingly, even with a decrease in the acceptance probability and average number of running MetaSlices, the average rewards obtained by the iMSAC+MiT increase as $r_3$ is varied from $1$ to $10$, as shown in Fig.~\ref{fig:vary_fee1}. 
		The reason is that when the immediate reward of class-3 is very high (e.g., $r_3\!=\!10$) compared to those of class-1 and class-2 (i.e., $1$ and $2$, respectively), the iMSAC+MiT reserves more resources for the future requests of class-3.  
		
	We now further investigate the above observations when varying the immediate reward of requests class-3 by looking deeper at the acceptance probability per class for each approach.
		Figures~\ref{fig:vary_fee2}(a)-(d) illustrate the acceptance rate per class according to the policies obtained by the proposed and counterpart approaches.		
		In Fig.~\ref{fig:vary_fee2}(c), the Greedy+MiT's acceptance probabilities for all classes are almost the same, at around $0.06$, when the immediate reward of class-3 increases from $1$ to $10$.
		A similar trend is observed for the Greedy but at a lower value, i.e., $0.04$, in Fig.~\ref{fig:vary_fee2}(d).
		In contrast to Greedy and Greedy+MiT, iMSAC+MiT's acceptance probability for class-3 increases while those of other classes decrease as $r_3$ increases from one to $10$, as shown in Fig.~\ref{fig:vary_fee2}(a).
		More interestingly, when the immediate reward of class-3 requests  is small (i.e., $r_3\!<\!2$), class-3 requests have the lowest acceptance probability compared to those of other classes.
		However, when the immediate reward of class-3 requests is larger than or equal to $2$, class-3 requests will achieve the highest acceptance probability compared with those of other classes. 		
		Moreover, when $r_3>4$, the acceptance probability for class-3 requests obtained by the iMSAC+MiT is stable at around $0.16$.
		
	Similar to the iMSAC+MiT, the iMSAC's acceptance probability for class-3 requests also increases until reaching a threshold with a lower value, i.e., around $0.14$, compared to that of the iMSAC+MiT.
		Furthermore, the acceptance probability of class-3 requests obtained by the iMSAC-base solutions is up to four-time greater than those of the Greedy-based solutions.	
		Thus, the iMSAC+MiT and iMSAC can obtain a good policy in which the acceptance probability of a class increases if its reward increases compared with the rewards of other classes, and vice versa. 		 			 
		Note that our proposed solution does not need complete information about the MetaSlice's arrival and departure processes in advance.
		However, as observed, the proposed solution can always achieve the best results in all scenarios when we vary important parameters of the system. 

	\section{Conclusion}
	In this paper, we have proposed two innovative techniques, i.e., the application decomposition and the MetaInstance, to maximize resource utilization for the Metaverse built on a multi-tier computing architecture.
		Based on these techniques, we have developed a novel framework for the Metaverse, i.e., MetaSlicing, that can smartly allocate resources for MetaSlices, i.e., Metaverse applications, to maximize the system performance.
		Moreover, we have proposed a highly effective framework based on sMDP together with an intelligent algorithm, i.e., iMSAC, to find the optimal admission policy for the Admission Controller under the high dynamics and uncertainty of resource demands.
		The extensive simulation results have clearly demonstrated the robustness and superiority of our proposed solution compared with the counterpart methods as well as revealed key factors determining the system performance.
				   	
	
	\ifCLASSOPTIONcaptionsoff
	\newpage
	\fi
	
	
	
	\bibliographystyle{IEEEtran}
	\bibliography{refs_mts}

\appendices
	\section{The Proof of Theorem 1}
	\label{ap:proof_theorem1}
	First, we need to prove the existence of the limiting matrix $\overline{\mathcal{T}}_\pi$ given in \eqref{eq:limiting_matrix}.
	In~\cite{puterman_markov_1994}, it is proven that for an aperiodic irreducible chain (as that of our proposed sMDP), the limiting matrix (which is the Cesaro limit of order zero named \textit{C-lim}) of the transition matrix, i.e.,$\mathcal{T}_\pi^g$,  exists as follows:
	\begin{equation}
		\label{eq:Cesaro_limit}
		\overline{\mathcal{T}}_\pi = \mbox{\textit{C-lim}} =\lim_{G \rightarrow \infty} \frac{1}{G} \sum_{g=0}^{G-1} \mathcal{T}_\pi^g.
	\end{equation}
	
	Next, because the total probabilities that a given state moves to others is one, i.e., \mbox{$\sum_{\mathbf{s}' \in \mathcal{S}}\mathcal{T}_\pi(\mathbf{s}|\mathbf{s}')=1$},  we have:	
	\begin{align}
		\overline{\mathcal{T}}_\pi r(\mathbf{s},\pi(\mathbf{s})) &=	\lim_{G \rightarrow \infty} \frac{1}{G+1} \mathbb{E} \left[\sum_{g=0}^{G} r(\mathbf{s}_g,\pi(\mathbf{s}_g)) \right], \forall \mathbf{s} \in \mathcal{S}, \\
		\overline{\mathcal{T}}_\pi y(\mathbf{s}, \pi(\mathbf{s})) &= \lim_{G \rightarrow \infty} \frac{1}{G+1} \mathbb{E} \left[\sum_{g=0}^{G} \tau_g \right] \forall \mathbf{s} \in \mathcal{S}.
	\end{align}	
	It is observed that the long-term average reward $\mathcal{R}_\pi(\mathbf{s})$ in~\eqref{eq:average_reward_function} is the ratio between $\overline{\mathcal{T}}_\pi r(\mathbf{s},\pi(\mathbf{s}))$ and $\overline{\mathcal{T}}_\pi y(\mathbf{s}, \pi(\mathbf{s}))$. 
		Note that by the quotient law for limits, the limit of a division of two functions is equivalent to the division of the limit of each function if the limit of a function at the denominator is not equal to zero.
		Since $\tau_g$, i.e., the interval time between two consecutive decision epochs, is always larger than zero, $\overline{\mathcal{T}}_\pi y(\mathbf{s}, \pi(\mathbf{s}))$ is always larger than zero. 
	Therefore, the long-term average reward $\mathcal{R}_\pi(\mathbf{s})$ exists. 
	
	\section{The Proof of Theorem 2}
	\label{ap:proof_theorem2}
	This proof is based on the irreducible property of the underlying Markov chain, which is proven as follows.
	As mentioned in Section~\ref{subsec:state_space}, the considered state space~$\mathcal{S}$ consists of the currently available resources in the system, the required resources, the class ID, and the similarity of the MetaSlice request.
	Recall that the request arrival and MetaSlice departure follow the Poisson and exponential distributions, respectively.
	In addition, the requested MetaSlice can have any function supported by the MetaSlicing system, and thus its required resources are arbitrary.
	Given the above, suppose that the Admission Controller observes state $\mathbf{s}$ at time $t$, then the system state can move to any other state $\mathbf{s}' \in \mathcal{S}$ after a finite time step.
	Therefore, the proposed sMDP is irreducible, and thus the long-term average reward function $\mathcal{R}_\pi(\mathbf{s})$ is well-defined regardless of the initial state and under any policy.
	
\end{document}